\documentclass[sigconf]{acmart}

\copyrightyear{2026}
\acmYear{2026}
\setcopyright{cc}
\setcctype{by-nc-nd}
\acmConference[ICS '26]{2026 International Conference on Supercomputing}{July 06--09, 2026}{Belfast, United Kingdom}
\acmBooktitle{2026 International Conference on Supercomputing (ICS '26), July 06--09, 2026, Belfast, United Kingdom}
\acmDOI{10.1145/3797905.3807836}
\acmISBN{979-8-4007-2522-7/2026/07}

\acmSubmissionID{53}

\AtBeginDocument{%
  }
\usepackage{booktabs}   
\usepackage{subcaption} 

\usepackage[linesnumbered,ruled,vlined]{algorithm2e}
\SetKwInput{KwInput}{Input}                
\SetKwInput{KwOutput}{Output} 

\usepackage{amsmath}   
\usepackage{caption}
\usepackage{listings}
\usepackage{xcolor}
\usepackage{graphicx}
\usepackage{xspace}
\usepackage{multirow}
\usepackage{centernot}
\usepackage{amsthm}
\usepackage{fontawesome}
\usepackage{url}
\usepackage{float}

\usepackage{enumitem} 
\setlist{leftmargin=*}

\newcommand{\metacomment}[3]%
{{\color{#1}{\bf #2}:\ #3}}

\usepackage[T1]{fontenc}

\newcommand{\etal}{\textit{et. al.}\xspace}

\newtheoremstyle{bfnote}%
  {}{}
  {\itshape}{}
  {\bfseries}{.}
  { }{\thmname{#1}\thmnumber{ #2}\thmnote{ (#3)}}
\theoremstyle{bfnote}
\newtheorem{defi}{Definition}

\begin{document}

\title{Rethinking Collision Detection on GPU Ray Tracing Architecture}

\author{Durga Mandarapu}
\authornote{This work was conducted during the author’s Ph.D. at Purdue University.}
\affiliation{
    \institution{Purdue University, Lawrence Berkeley National Laboratory}
    \city{Berkeley}
    \state{CA}
    \country{USA}}
\email{dmandara@purdue.edu}

\author{Isaac Fuksman}
\affiliation{
    \institution{Purdue University}
    \city{West Lafayette}
    \state{IN}
    \country{USA}}
\email{ifuksman@purdue.edu}

\author{Artem Pelenitsyn}
\affiliation{
    \institution{Purdue University}
    \city{West Lafayette}
    \state{IN}
    \country{USA}}
\email{apelenit@purdue.edu}

\author{Gilbert Bernstein}
\affiliation{
    \institution{University of Washington}
    \city{Seattle}
    \state{WA}
    \country{USA}}
\email{gilbo@cs.washington.edu}

\author{Milind Kulkarni}
\affiliation{
    \institution{Purdue University}
    \city{West Lafayette}
    \state{IN}
    \country{USA}}
\email{milind@purdue.edu}

\renewcommand{\shortauthors}{Mandarapu et al.}
\begin{abstract}

Discrete Collision Detection (DCD) is a fundamental task in several domains including particle-based physics simulations.
Efficient DCD uses indexing structures such as Bounding Volume Hierarchy (BVH), but accelerating irregular BVH traversals demands meticulous efforts to achieve performance.
Modern GPUs feature Ray Tracing (RT) architecture that provides hardware acceleration for BVH traversal and optimized drivers for BVH construction. 
Recent work has attempted to exploit RT architecture to accelerate DCD on spherical particles by reducing DCD to fixed-radius neighbor search.
However, this reduction breaks down for particles with different radii, necessitating the use of large bounding boxes that result in a higher number of duplicate collisions and poor performance.

To address these limitations, we present Mochi, a new reduction that reformulates DCD on RT architecture by exploiting the symmetry of collision relations to support both uniform and non-uniform spherical particles efficiently. 
Mochi introduces per-object proxy spheres that decouple BVH bounding volumes from the collision search radius, enabling significantly tighter bounding boxes without sacrificing correctness. 
Mochi is provably sound and guarantees that all true collisions are detected.
We integrate Mochi into an end-to-end particle simulation pipeline and evaluate it across large-scale particle workloads, showing consistent speedups over state-of-the-art BVH-based and RT-based DCD implementations. 
Mochi generalizes prior RT-based neighbor search formulations while avoiding their fundamental limitations for non-uniform spheres.

\end{abstract}

%
%

\begin{CCSXML}
<ccs2012>
   <concept>
       <concept_id>10010147.10010371.10010352.10010381</concept_id>
       <concept_desc>Computing methodologies~Collision detection</concept_desc>
       <concept_significance>500</concept_significance>
       </concept>
   <concept>
       <concept_id>10010147.10010371.10010352.10010379</concept_id>
       <concept_desc>Computing methodologies~Physical simulation</concept_desc>
       <concept_significance>500</concept_significance>
       </concept>
   <concept>
       <concept_id>10010147.10010371.10010387.10010389</concept_id>
       <concept_desc>Computing methodologies~Graphics processors</concept_desc>
       <concept_significance>500</concept_significance>
       </concept>
    <concept>
       <concept_id>10010147.10010371.10010372.10010374</concept_id>
       <concept_desc>Computing methodologies~Ray tracing</concept_desc>
       <concept_significance>500</concept_significance>
       </concept>
 </ccs2012>
\end{CCSXML}

\ccsdesc[500]{Computing methodologies~Collision detection}
\ccsdesc[500]{Computing methodologies~Physical simulation}
\ccsdesc[500]{Computing methodologies~Graphics processors}
\ccsdesc[500]{Computing methodologies~Ray tracing}
%
\keywords{Discrete Collision Detection, GPU Ray Tracing architecture, GPU computing, Rigid spheres, Particle systems}


\maketitle

\section{Introduction}

Discrete collision detection is a fundamental computational task in numerous domains, including physical simulations \cite{breuer2015modeling}, computer graphics~\cite{weller2013new}, and industrial manufacturing~\cite{manufacturing}. 
Given a set of objects, a DCD kernel finds all pairs of colliding (or intersecting) objects at a discrete time step and allows the next stages of the application pipeline to resolve collisions.
To prune the quadratic search space, DCD algorithms commonly index objects using spatial acceleration structures, such as BVH trees.
As such, efficient collision detection involves quickly building high-quality BVH and traversing it to identify colliding object pairs.
However, despite decades of research, accelerating DCD using GPU shader cores remains a challenging task requiring meticulous attention to performance, due to the highly irregular control flow and memory access patterns inherent in parallel BVH traversal~\cite{wang2018, oibvh}.

Recent GPUs feature a dedicated ray tracing architecture that augments traditional shader cores with specialized RT cores for accelerating BVH  traversals and geometric intersection tests---key components of DCD.
Although originally introduced to accelerate real-time rendering, RT architecture has increasingly been repurposed for non-rendering workloads involving irregular tree traversals.
Prior work has demonstrated the effectiveness of RT architecture for accelerating a range of proximity and spatial workloads, including nearest neighbor search (NNS)~\cite{evangelou,rtnn,trueknn,arkade}, outlier detection~\cite{dbscan,rtod}, database indexing~\cite{rtscan,rtindex,rtcudb}, spatial queries~\cite{rayjoin,librts}, and Barnes--Hut simulations~\cite{rtbarneshut}.
Building on the reductions of these RT-accelerated spatial workloads, several recent works~\cite{Zhao_arbitrary, vassilev_2021, zhao2023leveraging,roboticspaper} attempt to accelerate DCD by mapping it to RT architecture.

Prior works on accelerating DCD using RT architecture~\cite{Zhao_arbitrary, vassilev_2021, zhao2023leveraging, roboticspaper} implicitly reduce DCD to a neighbor search problem.
In case of particle-based simulations, these approaches index objects as spheres of radius $r$ and model collision queries as point rays, directly reusing the neighbor search reduction.
This formulation works well when all spheres have the same radius (uniform spheres), but it does not generalize to the more common case of spheres with different radii (non-uniform spheres).
To preserve correctness, they approximate non-uniform spheres to large “enough" uniform spheres, which requires constructing overly conservative bounding boxes when building BVHs.
As a result, BVH quality degrades and a large number of unnecessary collision tests are performed on object pairs that do not actually collide, increasing the overall cost of DCD.
More fundamentally, this reduction overlooks the opportunity to exploit RT architecture’s ability to handle irregular, object-centric queries and fails to optimally support non-uniform spheres typical of practical collision detection workloads.

To address the limitations of existing RT-based DCD formulations, we present \emph{Mochi}, a new reduction that reformulates DCD to efficiently leverage ray tracing architecture for all spherical objects, including non-uniform spheres.
Unlike prior reductions that require both objects in a colliding pair to independently detect the collision, Mochi exploits the symmetry of collision relations in DCD and requires detection by only one of the two objects.
This observation enables the construction of per-object proxy spheres that yield significantly tighter bounding boxes without resorting to a large one-size-fits-all radius, thereby reducing unnecessary intersection tests.
Despite relying on smaller bounding volumes, our reduction is provably sound and guarantees that no true collisions are missed.
We further show that the NNS-based uniform-radius approach used in prior work emerges as a special case of our more general formulation.


To demonstrate the applicability of our reduction in real-world use cases, we integrate Mochi into a Discrete Element Method (DEM) particle simulation pipeline.
Custom collision responses can be seamlessly integrated into this pipeline.
The contributions of this paper are as follows:
\begin{enumerate}

\item We identify a fundamental misconception in prior RT-based approaches to collision detection, showing that reducing DCD to fixed-radius neighbor search breaks down in case of non-uniform spheres.

\item We propose a sound reduction for DCD on RT hardware that exploits the symmetry of collision relations and enables tighter, per-object bounding boxes via proxy spheres.

\item We implement this reformulation as Mochi, an RT-based DCD reduction for large-scale particle simulations, and integrate it into a DEM solver\footnote{https://github.com/MDurgaKeerthi/Mochi-DCD-on-RT}.

\item We evaluate Mochi on multiple large-scale benchmarks, demonstrating significant performance improvements over state-of-the-art BVH- and RT-based DCD approaches.


\end{enumerate}

The remainder of the paper is organized as follows.
Section~\ref{sec:background} reviews background on DCD, GPU RT architecture, and prior RT-based DCD formulations. 
Section~\ref{sec:design} presents the Mochi reduction, including the proxy-sphere construction and a formal correctness proof for non-uniform spheres. 
Section~\ref{sec:particlesim} describes the integration of Mochi into a GPU-based DEM simulation pipeline. 
Section~\ref{sec:evaluation} evaluates Mochi across a range of large-scale particle workloads and compares it against state-of-the-art BVH- and RT-based approaches. 
Section~\ref{sec:discussion} discusses limitations and broader applicability, Section~\ref{sec:relatedwork} surveys related work, and Section~\ref{sec:conclusion} concludes.

\section{Background}
\label{sec:background}

In this section, we provide background to DCD, RT architecture and its programming model, and reduction used in the prior DCD work. 

\subsection{Discrete Collision Detection}
Discrete collision detection is widely used in fluid mechanics~\cite{breuer2015modeling}, computer graphics~\cite{oibvh}, and path planning ~\cite{yershova2007improving}, for simulating real-world object behaviors. 
A DCD kernel takes a scene with multiple objects as input and determines all the pairs of colliding objects.
Exhaustively searching all pairs of objects for collisions takes $\mathcal{O}(n^2)$ time, where $n$ is the number of the objects in the scene.
To employ optimizations and acceleration structures, the DCD pipeline is typically divided into two phases : a broad phase and a narrow phase.
In the broad phase of detecting collisions, the goal is to quickly prune the pairs of objects that may not cause a collision by employing methods such as sweep-and-prune (SAP), spatial or object partitioning techniques, or spatial hashing. 
An example for object partitioning techniques is a BVH as it divides the set of objects providing an $\mathcal{O}(n\ logn)$ time complexity.
The potential colliding pairs of objects from the broad phase are then tested using expensive object intersection tests to determine the actual colliding pairs of objects in the narrow phase.
Section~\ref{sec:relatedwork} describes the optimizations the prior works employ in both the phases. 

Objects in collision detection are commonly represented as either spheres or triangle meshes. 
In this work, we focus on objects with spherical geometry, where all spheres have either the same radius (uniform) or varying radii (non-uniform). 
In both cases, the spheres are assumed to have uniform density.
We use the terms \emph{spherical particles}, \emph{spherical objects}, and \emph{spheres} interchangeably. 
Spherical objects are widely used in particle-based simulations and in applications where complex objects are approximated to collections of simple primitives.
For example, Liu \etal~\cite{donut} uses a group of six spheres to represent a torus (donut), while Sui \etal~\cite{roboticspaper} uses sphere-based approximation of robots.
Prior work on DCD using RT architecture supports only spheres with uniform radius, even though non-uniform spheres are more common in practice. 
For example, in smoothed particle hydrodynamics and granular simulations, particles are often modeled as spheres of varying radii to account for differences in mass, density, and scale~\cite{sph,granular}.

\subsection{Ray Tracing Architecture}
RT architecture is offered by multiple GPU vendors such as Nvidia, AMD, Apple, and Intel. 
In this paper, we focus on Nvidia RT architecture, which comprises of shader cores and third-generation RT cores. 
The main objective of RT architecture is to accelerate the RT algorithm for real-time rendering. 
RT algorithm begins by encapsulating objects in the scene with Axis-Aligned Bounding Boxes (AABBs), building a BVH on the AABBs and then launching rays per pixel in the image plane to determine the shading of the pixel based on the ray-object intersections.
To achieve this, the architecture supports BVH construction through optimized drivers that run on shader cores and BVH traversal on RT cores along with ray-triangle or ray-AABB intersection tests.

\subsubsection{Programming Model}
OptiX~\cite{optix} provides a CUDA-style programming interface to RT architecture by allowing the user to call inbuilt OptiX drivers and custom kernels at various stages in the RT programming pipeline.  
Using an inbuilt OptiX call, BVH can be constructed on triangles, spheres, curves, and AABBs. 
OptiX considers spheres hollow, so we used AABBs to contain our proxy spheres.
The set of programs where the RT pipeline allows the user to write custom code are RayGen, Intersection, AnyHit, ClosestHit, and Miss. 
We explain how we use RayGen and Intersection programs in Section~\ref{sec:particlesim}.
AnyHit and ClosestHit programs allow the user to choose whether to continue or terminate the BVH traversal after an intersection.
These shaders are more useful when ranking intersections, so we do not use them.
We do not build Miss shader either in our RT pipeline.


\subsection{Neighbor search reduction in prior DCD work}

Given a set of data points, query points, and a radius parameter $r$, fixed-radius nearest neighbor search finds all data points within distance $r$ of each query point. 
To map this problem onto RT architecture, prior work~\cite{zellman, evangelou} constructs spheres of radius $r$ each centered around a data point and checks if query points fall inside the spheres.
This formulation checks for query points falling within a distance of $r$ from data points instead of checking for data points falling within a distance of $r$ from query point.
Ignoring this inverted formulation, the prior RT-based DCD work~\cite{zhao2023leveraging} and all open-source implementations~\cite{hiprt,RayTracedSPH} directly employ this NNS reduction to DCD.
Given a set of sphere centers and their fixed radius $r$, they use NNS reduction~\cite{zellman, evangelou} to find all centers that are within a distance of $2r$ of each center and interpret the resulting neighbors as colliding pairs.

The challenge is that the NNS reduction cannot support multiple radius parameters simultaneously, which is required for DCD on non-uniform spheres.
As shown in Figure~\ref{fig:uniform_nonuniform} (right), for a sphere of radius $r_1$, the center of a colliding sphere may lie at a distance of up to $r_1 + r_{\max}$, where $r_{\max}$ is the maximum radius in the dataset. But for a sphere of radius $r_2$, the center of a colliding sphere may lie at a distance of up to $r_2 + r_{\max}$.
Consequently, there is no single fixed search radius that can be used with NNS reduction.
One na\"ive alternative is to run the NNS reduction multiple times, each with a different radius parameter. 
However, this would require constructing and traversing multiple BVHs, which is impractical given the high cost of BVH construction on RT architecture. 
Mochi instead detects all collisions in a single iteration by using non-uniform proxy spheres, making our approach a non-trivial generalization of the NNS-to-RT formulation rather than a direct application.

\begin{figure}[h]
  \centering
    \includegraphics[width=0.95\linewidth]{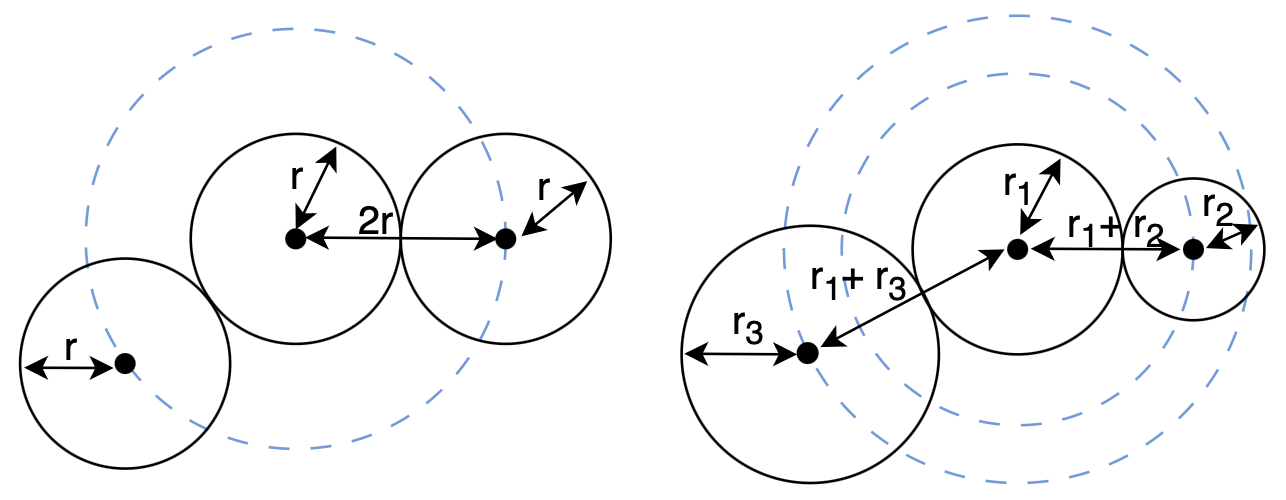}
  \caption{Collisions occur within a distance of $2r$ and $r_i + r_{max}$ from sphere centers for uniform (left) and non-uniform (right) spheres, respectively.}
  \Description{Collisions occur within a distance of 2r and r i plus r max from sphere centers for uniform (left image) and non-uniform (right image) spheres, respectively.}
  \label{fig:uniform_nonuniform}
\end{figure}

Another alternative is to use a single radius which is equal to the largest of all radii and perform the NNS reduction once with it, followed by filtering neighbors to get actual radius neighbors for each query based on the individual particle’s radius.
Prior work adopts this approach by selecting a predetermined search radius large enough to encompass any possible collision pair.
While this avoids false negatives, it results in excessively large bounding volumes and a high number of false positives, necessitating substantial filtering and leading to sub-optimal performance.
By constructing tighter, per-object bounding volumes, Mochi significantly reduces false positives while preserving correctness, improving traversal efficiency without missing collisions.


\section{Collision Detection using Proxy Spheres}
\label{sec:design}


This section describes how Mochi performs collision detection for spherical particles on RT architecture using an intermediate construction of \emph{proxy spheres}.
We focus on DCD for non-uniform spheres, a problem that was not known to be reducible to the RT problem prior to our work.
First, we present the proxy-sphere formulation in the uniform-radius setting to establish intuition (Section~\ref{sec:uniform}), then show why this formulation alone is insufficient, and finally describe how exploiting the symmetry of collision relations enables correct detection for non-uniform spheres (Section~\ref{sec:non-uniform}).
We conclude with a formal correctness proof of the resulting reduction (Section~\ref{sec:proof}).

Rather than treating DCD as a query-centric neighbor search problem, our formulation allows a collision to be detected by either object in a colliding pair. 
Our key observation is that the radii of the spheres in DCD need not have a single correspondence to the search radius used in the neighbor search. 
In particular, decoupling the sphere's bounding volume used in BVH traversal from the actual sphere–sphere collision test enables much tighter per-sphere AABBs than prior approaches.
To capture this fundamental difference, we use proxy spheres to represent a sphere's bounding volume based only on its geometric extent, while relying on ray–proxy-sphere intersections to trigger the collision tests correctly.
Essentially, when two spheres collide, their proxy spheres \emph{together} encompass the collision, ensuring that at least one of the two particles detects it.
We formalize this reduction by first defining proxy spheres in Definition~\ref{def:proxy}.

\begin{defi} [Proxy Spheres] Given a sphere $S$ with center $c$ and radius $r$, the \emph{proxy sphere} corresponding to $S$ (denoted by \textit{proxy-sphere($S$)}) is a sphere with the same center $c$ and radius $2r$.
\label{def:proxy}
\end{defi}

To accelerate DCD on RT architecture, the Mochi reduction starts by constructing a proxy sphere for each input sphere and builds a BVH over the resulting set of proxy spheres. 
Mochi then launches a point ray from the center of every sphere, where a point ray is a ray with infinitesimal length and arbitrary direction. 
These rays traverse the BVH and report ray–proxy-sphere intersections.
Each reported intersection identifies a candidate collision between the sphere corresponding to the ray origin and the sphere corresponding to the intersected proxy sphere. 
Mochi subsequently performs a sphere–sphere intersection test on each such candidate pair to determine whether a true collision has occurred. 
By combining the intersections reported across all rays, Mochi’s DCD kernel identifies all pairs of colliding spheres.

\subsection{Uniform Spheres} 
\label{sec:uniform}

We first consider the special case of uniform-radius spheres, where all spheres have the same radius $r$. 
In this scenario, the Mochi reduction simplifies and coincides with the fixed-radius neighbor search formulation used in prior RT-based DCD work.

For uniform spheres, two spheres $S_i$ and $S_j$ collide if and only if the distance between their centers $\|c_i - c_j\|$ is less than or equal to $2r$. 
This condition holds when a point ray launched from $c_i$ intersects the proxy sphere of $S_j$, since each proxy sphere has radius $2r$. 
Consequently, every ray-proxy-sphere intersection corresponds to a true collision, and every collision is detected by the rays launched from both sphere centers as shown in Figure~\ref{fig:proxyspheres}a.
As a result, in the uniform-radius case, Mochi’s reduction is equivalent to performing a fixed-radius neighbor search with search radius of $2r$, and proxy spheres alone are sufficient to detect all collisions.

\begin{figure}[h]
  \centering
    \includegraphics[width=\linewidth]{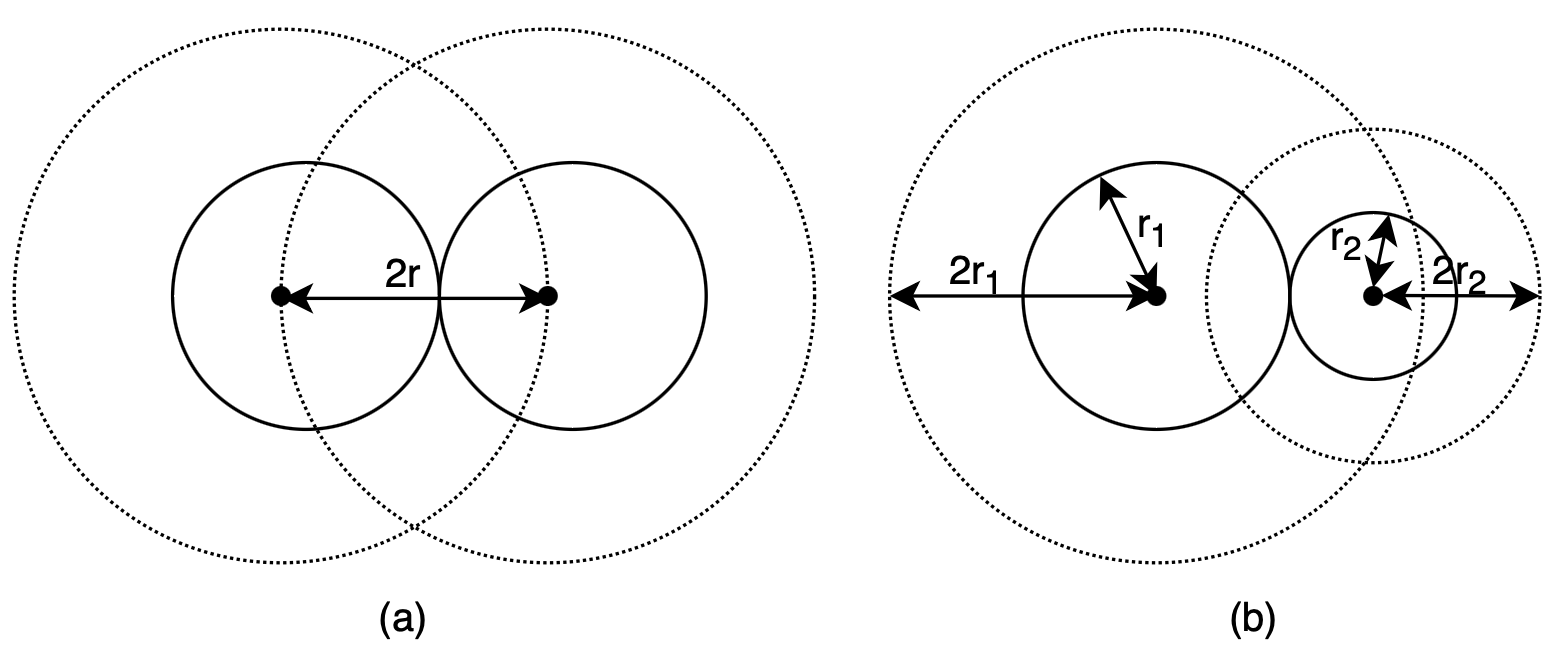}
  \caption{Construction of proxy spheres and point rays. Actual spheres have solid boundary lines while proxy spheres have dotted ones. For uniform spheres, proxy spheres are enough. But in the case of non-uniform spheres, simply using proxy spheres leads to missed collisions (false negatives). }
  \Description{Construction of proxy spheres and point rays. Actual spheres have solid boundary lines while proxy spheres have dotted ones. For uniform spheres, proxy spheres are enough. But in the case of non-uniform spheres, simply using proxy spheres leads to missed collisions (false negatives).}
  \label{fig:proxyspheres}
\end{figure}

\subsection{Non-uniform spheres} 
\label{sec:non-uniform}

However, this equivalence between DCD and neighbor search relies on all spheres sharing a common radius and no longer holds for non-uniform spheres.
For uniform spheres, all collisions involving a sphere of radius $r$ occur within a distance of $2r$ between the centers, enabling a fixed-radius neighbor search formulation.
But when the radii of spheres vary, this characterization breaks down: for a sphere $S_i$ with radius $r_i$, the distance to the center of a colliding sphere also depends on the collision sphere’s radius $r_j$, and can exceed $2r_i$. 
As a result, there is no single search radius associated with $r_i$ or $r_j$ that captures all collisions.

\subsubsection*{Why proxy spheres alone are insufficient for non-uniform spheres}

The key challenge in generalizing the neighbor search reduction to non-uniform spheres is that the radius of the proxy sphere does not correspond to the search radius in NNS.
This mismatch breaks the prior approaches that rely solely on proxy spheres or fixed-radius neighbor search. 
Consider two colliding spheres $S_i$ and $S_j$ with radii $r_i$ and $r_j$.
The distance between their centers is at most $r_i+r_j$, but it has no mathematical relation to $2r_i$ or $2r_j$, the radii of their proxy spheres.
A search radius of $2r_i$ may miss collisions with larger spheres ($r_j > r_i$), while increasing the search radius to accommodate larger spheres introduces excessive false positives. 
Hence, using NNS with $2r_i$ as the search radius or the proxy spheres alone will lead to false positives (if $r_i > r_j$) or false negatives (if $r_i < r_j$). 

As illustrated in Figure~\ref{fig:proxyspheres}b, a point ray launched from the center of a larger sphere may fail to intersect the proxy sphere of a smaller colliding sphere, leading to a missed collision.
Such false negatives are particularly problematic in DCD, as missed collisions compound over time and can result in qualitatively incorrect simulations. 
Figure~\ref{fig:onlyproxy} demonstrates this effect after eight seconds of freefall when using proxy spheres alone.
\footnote{Refer to the first 8 seconds of the supplemental video for complete simulations.}

\begin{figure}[h]
    \centering
    \begin{minipage}{\linewidth}
        \centering
        \begin{subfigure}{0.49\linewidth}
            \includegraphics[height=3.5cm, width=\linewidth]{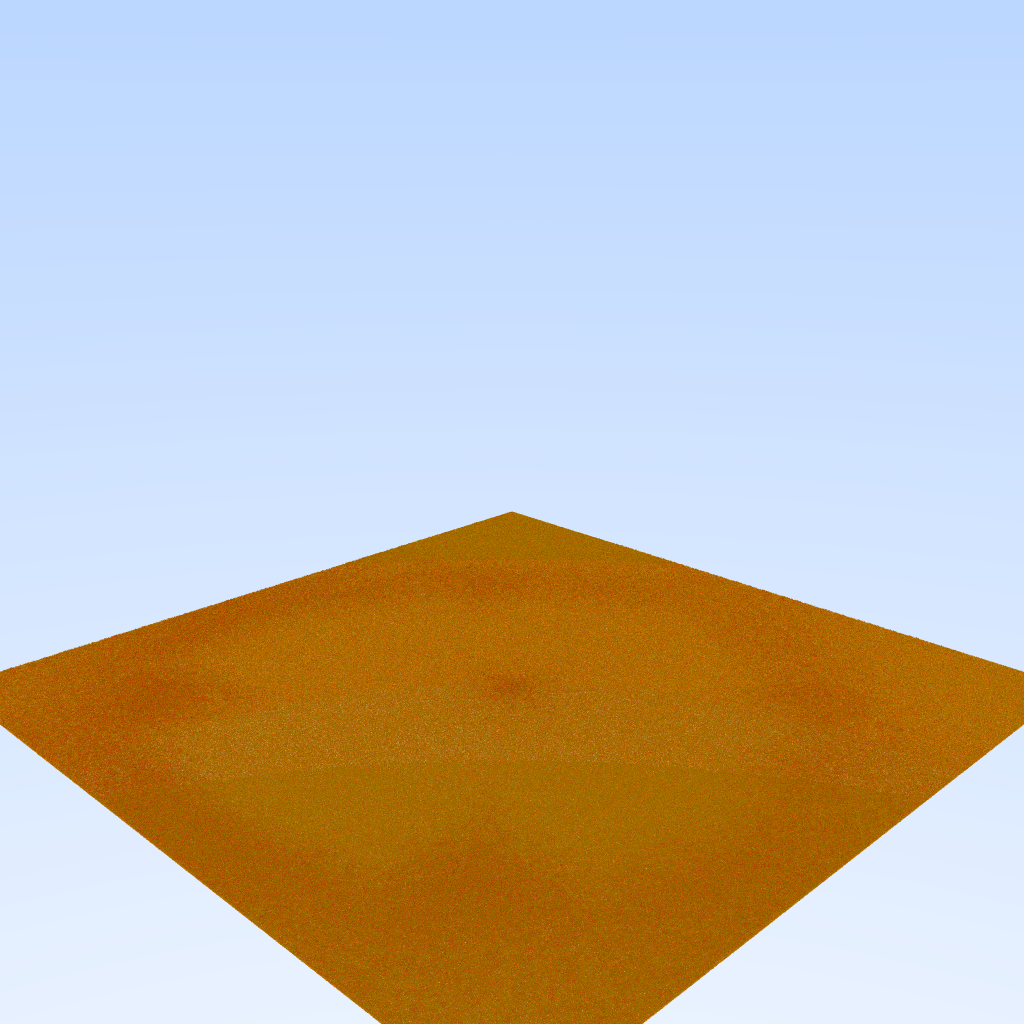}
            \caption{Expected simulation}
            \label{fig:correct_mochi}
        \end{subfigure}
        \hfill
        \begin{subfigure}{0.49\linewidth}
            \includegraphics[height=3.5cm, width=\linewidth]{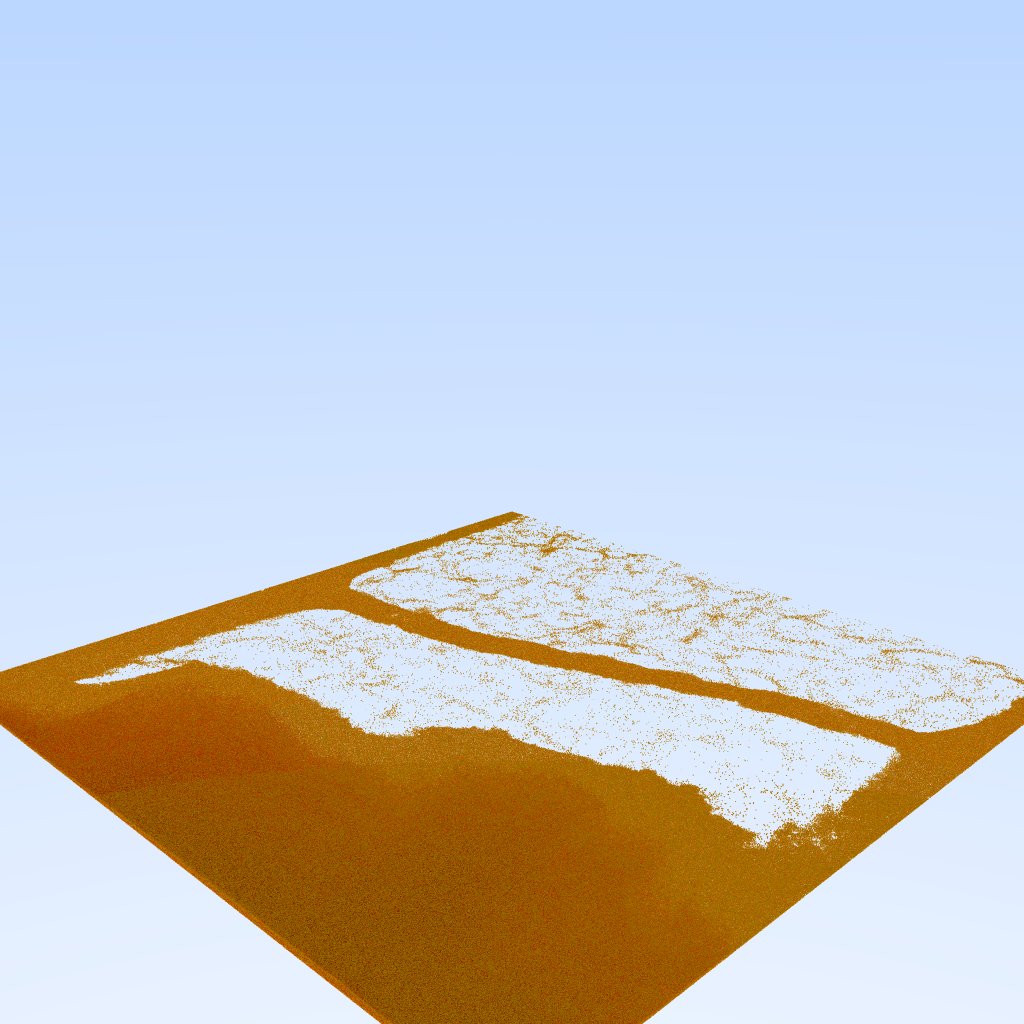}
            \caption{With just proxy spheres}
            \label{fig:incorrect_prior}
        \end{subfigure} 
    \end{minipage}
    \caption{Particles under freefall after eight seconds: using proxy spheres alone results in an incorrect simulation (right) very different from the expected simulation (left).}
    \Description{Particles under freefall after eight seconds: using proxy spheres alone results in an incorrect simulation (right) very different from the expected simulation (left).}
    \label{fig:onlyproxy}
\end{figure}

\paragraph*{Collision Symmetry}

Our key observation is that collision detection is symmetric with respect to the two colliding objects. 
That is, detecting a collision between spheres $S_i$ and $S_j$ does not require both particles to independently detect the collision. It is sufficient if the collision is detected by either of the two spheres.
This symmetry allows Mochi to tolerate per-ray false negatives while still guaranteeing correct per-pair collision detection. 
In particular, while the point ray launched from the center of a larger sphere may fail to intersect the proxy sphere of a smaller colliding sphere, the converse is guaranteed to occur. 
As illustrated in Figure~\ref{fig:proxyspheres}b, the point ray launched from the center of the smaller sphere intersects the proxy sphere of the larger sphere, ensuring that the collision is detected.

\paragraph*{Mochi Reduction for Non-uniform Spheres}
Formally, Mochi applies the same proxy-sphere construction as in the uniform-radius case: for each sphere $S_i$ with center $c_i$ and radius $r_i$, a proxy sphere of radius $2r_i$ is constructed, and a point ray is launched from $c_i$. 
A pair of spheres $(S_i, S_j)$ is reported as colliding if \emph{either} the point ray launched from $c_i$ intersects the proxy sphere of $S_j$ \emph{or} the point ray launched from $c_j$ intersects the proxy sphere of $S_i$, and the subsequent sphere--sphere intersection test confirms a collision.
We leverage the intersections found by both rays corresponding to the collision pair: false negative reported by one particle’s ray will be detected as true positive by the other.
This asymmetric detection rule exploits the symmetry of collisions to ensure that all true collisions are detected, even though individual rays may miss collisions in isolation.


\subsection{Correctness of Mochi reduction}
\label{sec:proof}

We now prove that the Mochi reduction detects all pairs of colliding spheres. 
We begin by formally defining the reduction (Definition~\ref{def:mochi}). 
We then establish a key lemma (Lemma~\ref{lemma:1}) showing that, for any colliding pair of spheres, at least one of the two corresponding point rays intersects the other’s proxy sphere. 
While individual rays may miss collisions, DCD correctness depends on per-pair detection rather than per-ray.
Finally, we use this lemma to prove the correctness of the Mochi reduction (Theorem~\ref{theorem:1}).

\begin{defi} [Mochi reduction] \label{def:mochi}  
Given a set of spheres $\mathcal{S}$, for every sphere $S_i\in \mathcal{S}$ with center $c_i$ and radius $r_i$, \emph{Mochi reduction} builds \emph{proxy-sphere($S_i$)}, launches a point ray from $c_i$ and detects all colliding pairs ($S_i,S_j$) by coordinating the ray-proxy-sphere intersections found by either of the rays launched from $c_i$ and $c_j$.
\label{def:proxy}
\end{defi}

\begin{lemma} [Collision Symmetry] \label{lemma:1}
If two spheres, $S_i$ and $S_j$, with centers $c_i$ and $c_j$ and radii $r_i$ and $r_j$, are colliding, then the point ray launched from $c_i$ will intersect \emph{proxy-sphere($S_j$)}, or vice-versa. 
\end{lemma}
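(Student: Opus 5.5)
The plan is to reduce the statement to an elementary inequality on the center distance, using a case split on which sphere is larger. First I would fix the geometric meaning of the two events. Two spheres $S_i$ and $S_j$ collide iff $\|c_i - c_j\| \le r_i + r_j$. A point ray launched from $c_i$ has infinitesimal length, so it intersects \emph{proxy-sphere($S_j$)} exactly when its origin $c_i$ lies in the closed ball of radius $2r_j$ about $c_j$. By Definition~\ref{def:proxy}, that is the condition $\|c_i - c_j\| \le 2r_j$. Symmetrically, the ray from $c_j$ hits \emph{proxy-sphere($S_i$)} iff $\|c_i - c_j\| \le 2r_i$. The lemma therefore reduces to showing that
\[
\|c_i - c_j\| \le r_i + r_j \;\Longrightarrow\; \|c_i - c_j\| \le 2\max(r_i, r_j).
\]

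Next I would split into cases without loss of generality. Assume $r_i \le r_j$; the other case follows by exchanging indices, which is precisely the ``or vice-versa'' in the statement. Then $r_i + r_j \le 2r_j$, so $\|c_i - c_j\| \le r_i + r_j \le 2r_j$. Hence $c_i$ lies inside \emph{proxy-sphere($S_j$)}, and the ray launched from the center of the smaller sphere intersects the proxy sphere of the larger one. This matches the intuition of Figure~\ref{fig:proxyspheres}b. When $r_i = r_j$ both rays succeed, which recovers the uniform case of Section~\ref{sec:uniform}.

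The only real obstacle is the modeling step rather than the inequality. We must justify that a point ray ``intersecting'' a proxy sphere means its origin lies in the solid ball. Two points matter here. OptiX treats spheres as hollow, and the implementation uses AABBs enclosing the proxy spheres, so I would state explicitly that the intersection program performs a containment test of the origin in the ball of radius $2r_j$. I would also treat the boundary case $\|c_i-c_j\| = 2r_j$ as an intersection, so that touching spheres are covered. With that convention made explicit, the argument above is complete.
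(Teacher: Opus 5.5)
Your proof is correct and is essentially the paper's argument run directly rather than by contradiction: the paper assumes both rays miss and derives the incompatible conclusions $r_i > r_j$ and $r_j > r_i$ from $r_i + r_j \ge \|c_i - c_j\| > 2r_j$ and $r_i + r_j \ge \|c_i - c_j\| > 2r_i$, which is exactly the contrapositive of your step $\|c_i - c_j\| \le r_i + r_j \le 2\max(r_i, r_j)$. Your closed-ball reading of a point ray hitting a proxy sphere is the same convention the paper uses implicitly (it takes a miss to mean distance strictly greater than $2r_j$). Your WLOG version also states explicitly which ray succeeds, namely the one from the smaller sphere.
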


\begin{proof}
We prove by contradiction. We assume that $S_i$ and $S_j$ are colliding and neither of their corresponding rays  intersect the other's proxy sphere.

Since $S_i$ and $S_j$ are colliding, the distance between their centers is at most sum of their radii.
\begin{equation}\label{eqn:1}
    \|c_i-c_j\|\le r_i + r_j
\end{equation}

If the point ray launched from $c_i$ does not intersect \emph{proxy-sphere($S_j$)}, then $c_i$ is outside the proxy sphere and so the distance between $c_i$ and $c_j$, the center of proxy-sphere($S_j$) is greater than the radius of proxy-sphere($S_j$).
\begin{equation}\label{eqn:2}
    \|c_i-c_j\| > 2r_j
\end{equation}

Combining Equations~\ref{eqn:1} and  ~\ref{eqn:2} gives us the following.
\begin{equation}\label{eqn:3}
   r_i + r_j \ge \|c_i-c_j\| > 2r_j \implies r_i > r_j
\end{equation}

Similarly when the point ray launched from $c_j$ does not intersect \emph{proxy-sphere($S_i$)}, we have the following.
\begin{equation}\label{eqn:4}
   r_i + r_j \ge \|c_i-c_j\| > 2r_i \implies r_j > r_i
\end{equation}

Equations~\ref{eqn:3} and ~\ref{eqn:4} contradict.
Hence, our assumption is false, and at least one of the two rays must intersect other’s proxy sphere.
\end{proof}

\begin{theorem}[Correctness of Mochi reduction]
    The Mochi reduction detects \emph{all} pairs of colliding spheres in $\mathcal{S}$.
\end{theorem}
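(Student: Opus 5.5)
The plan is to take an arbitrary colliding pair $(S_i,S_j)$ in $\mathcal{S}$ and show that the Mochi reduction (Definition~\ref{def:mochi}) reports it. Completeness is a per-pair property, so it suffices to fix one colliding pair and trace how it is surfaced. By Lemma~\ref{lemma:1}, at least one of two things happens: the point ray launched from $c_i$ intersects \emph{proxy-sphere($S_j$)}, or the point ray launched from $c_j$ intersects \emph{proxy-sphere($S_i$)}. Without loss of generality, by relabeling, we may assume the first case.

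Next, we argue that this intersection is actually reported by the RT pipeline. Every sphere's proxy sphere is enclosed in an AABB that is inserted into the BVH, and a point ray is launched from every center. BVH traversal is conservative: it never prunes a node whose bounding box contains the ray origin. Since $c_i$ lies inside \emph{proxy-sphere($S_j$)}, it lies inside that sphere's AABB and every ancestor box. Hence the traversal reaches the leaf for $S_j$, and the intersection test reports it. This produces the candidate pair $(S_i,S_j)$.

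The candidate is then passed to the exact sphere--sphere test, which checks $\|c_i-c_j\|\le r_i+r_j$. This condition holds precisely because the spheres collide, so the pair is confirmed. By Definition~\ref{def:mochi}, the reduction takes the union of confirmed pairs found by either ray of the pair, so $(S_i,S_j)$ appears in the output. Since the pair was arbitrary, all colliding pairs are detected.

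For completeness, I would also note soundness: every reported pair has passed the exact collision test, so no false pairs are output. I would also note that reporting a pair twice when both rays hit does not affect correctness. The only step requiring care, though not truly an obstacle, is the BVH-traversal claim. I would state it explicitly as an assumption about the RT hardware: traversal reports all primitives whose AABB contains the ray origin and whose custom intersection program accepts. The proof relies on that assumption together with Lemma~\ref{lemma:1}.
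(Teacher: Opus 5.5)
Your proposal is correct and follows the paper's argument: both proofs rest on Lemma~\ref{lemma:1} together with the fact that Definition~\ref{def:mochi} reports a pair whenever either of its two rays hits the other's proxy sphere. The paper argues by contradiction where you argue directly, and you also state explicitly the conservative BVH-traversal assumption and the narrow-phase check, both of which the paper leaves implicit in Definition~\ref{def:mochi}.
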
\label{theorem:1}

\begin{proof}
    Assume by contradiction that there exists a pair of colliding spheres $(S_i, S_j)$ that is not detected by the Mochi reduction.
    Then by the definition of Mochi reduction (Definition~\ref{def:mochi}), the point ray launched from $c_i$ does not intersect \emph{proxy-sphere($S_j$)}, and the point ray launched from $c_j$ does not intersect \emph{proxy-sphere($S_i$)}. 
    However, this contradicts Lemma~\ref{lemma:1}, which guarantees that at least one of these intersections must occur for any colliding pair. 
    Therefore, all colliding sphere pairs are detected by the Mochi reduction.
\end{proof}

\section{DEM simulation using RT architecture}
\label{sec:particlesim}

This section describes how Mochi implements DCD for particle simulations on GPU RT architecture. 
We integrate Mochi’s DCD kernel into a DEM solver and focus on execution structure rather than physics modeling details.

Mochi executes particle simulation as a sequence of discrete timesteps. 
At each timestep, Mochi (1) builds or refits BVH over proxy spheres, (2) detects collisions by launching rays that traverse the BVH, and (3) resolves collisions and updates state of the particles. 
Algorithm~\ref{alg:mochi} summarizes the resulting simulation loop.

The input to this algorithm is the list of spheres, where each sphere is represented by its center, radius, and additional simulation parameters such as mass, velocity, and collision force.
The output is an array of per-particle collision responses.
The implementation starts in Line~\ref{line:15} by initializing GPU buffers and registering the OptiX programs used in the pipeline: \texttt{OPTIX\_BOUNDS}, \texttt{OPTIX\_INTERSECT}, and \texttt{OPTIX\_RAYGEN}. 
The simulation runs for a total number of timesteps equal to the number of frames multiplied by the number of iterations per frame, with frames rendered every few timesteps to seamlessly support visualization.

\begin{algorithm}
\caption{Mochi's Particle Simulation with DCD}\label{alg:mochi}
\DontPrintSemicolon
  
  \KwInput{Spheres<center, radius, mass, velocity, force>$S$}
  \KwOutput{$\forall S_i \in S, S_i.\text{force} = \sum \text{collision forces on } S_i$}

  \SetKwFunction{Main}{Host\_Main}
  \SetKwFunction{Index}{OPTIX\_BOUNDS}
  \SetKwFunction{DCD}{OPTIX\_INTERSECT}
  \SetKwFunction{Raygen}
  {OPTIX\_RAYGEN}
  
  \SetKwProg{Fn}{Function}{:}{}
  \Fn{\Index{S, threadID}}{
        $thread\_id \gets optixGetLaunchIndex().x$
        $AABB[thread\_id] \gets Box().extend(S[thread\_id].center + 2*radius)$
        $.extend(S[thread\_id].center - 2*radius)$
        \label{line:2} \;
  }

  \SetKwProg{Fn}{Function}{:}{} 
  \Fn{\DCD{S}}{ \tcc{Narrow phase}\label{line:3}
    $ray\_id \gets optixGetLaunchIndex().x$  \label{line:4}\;
    $AABB\_id \gets optixGetPrimitiveIndex()$ \label{line:5}\;
    
    $S_i \gets S[ray\_id]$\; \label{line:6} 
    $S_j \gets S[AABB\_id]$\; \label{line:7}
     
    $dist \gets \|S_i.center-S_j.center\|$ \;\label{line:8}  
    \If{$S_i.radius + S_j.radius - dist \ge 0 $}   { \label{line:9} 
    \tcc{collision is detected}
        \If{$2*S_j.radius - dist \ge 0$ and ($AABB\_id > ray\_id$ or $2*S_i.radius - dist \le 0 $)}     
        { \label{line:10} 
            \tcc{Not a duplicate collision} 
            \text{Resolve\_Collision($S_i, S_j$)}    \label{line:11}  
        }
    }
  }

  \SetKwProg{Fn}{Function}{:}{} 
  \Fn{\Raygen{S}}{\label{line:12}
    $ray\_id \gets optixGetLaunchIndex().x$ \label{line:13}\;
    optixTrace(ray($S[ray\_id]$, vec3f(0,0,1), 1.e-16f), params)
                \label{line:14} \;
  }

  \SetKwProg{Fn}{Function}{:}{}
  \Fn{\Main{}}{\label{line:15}
        optixAccelBuild(S, params) \tcp*{constructs BVH} \label{line:17}
        \For{$i \gets 1$ \KwTo $Num\_Frames$}{ \label{line:18}

            \For{$j \gets 1$ \KwTo $Num\_Iterations\_Per\_Frame$}{
                optixLaunch(S, params) \tcp*{initiates DCD} \label{line:20}
                update(S) \label{line:21}\;
                optixAccelBuild(S, Refit (or Rebuild), params) \label{line:22}\;
            }
        
            Render\_Frame() \label{line:23} \tcp*{Optional}
        } 
  }

\end{algorithm}

\paragraph{BVH maintenance.}
Before the simulation begins, a BVH is constructed over the AABBs of proxy spheres by calling \emph{optixAccelBuild} (Line~\ref{line:17}).
At subsequent timesteps, the BVH is either refit or rebuilt to reflect updated particle positions. 
The same BVH and RT pipeline are reused to render particles for visualization (Line~\ref{line:23}).

\subsection{DCD Implementation}
Given a set of spheres, Mochi’s DCD kernel identifies all colliding pairs of spheres and computes the corresponding collision forces. 
To initiate DCD, we call \emph{optixLaunch} (Line~\ref{line:20}), which invokes the \texttt{OPTIX\_RAYGEN} program (Line~\ref{line:12}). 
This program launches rays in parallel with one point ray per sphere from the sphere center along the $z$-axis. 
Each point ray has an infinitesimal length, and its direction can be arbitrary.

\paragraph{Broad phase.}
The broad phase primarily consists of BVH traversal and ray--AABB intersection tests, both of which are executed on RT cores.
As rays traverse the BVH, each ray--AABB intersection triggers the \texttt{OPTIX\_INTERSECT} program (Line~\ref{line:3}). 
Each such intersection identifies a candidate collision between the sphere from which the ray is launched and the sphere corresponding to the proxy sphere contained in the intersected AABB. 

\paragraph{Narrow phase}
Unlike conventional DCD pipelines that strictly separate broad and narrow phases, Mochi interleaves these phases, thereby avoiding the materialization of large list of candidate collision pairs.
The narrow phase starts when \texttt{OPTIX\_INTERSECT} program is invoked in the broad phase.
For each ray--AABB intersection, Mochi retrieves the sphere corresponding to the ray origin and the sphere corresponding to the intersected AABB (Lines~\ref{line:4}--\ref{line:7}). 
The distance between their centers is then computed (Lines~\ref{line:8}--\ref{line:9}) to determine whether a collision occurred between the spheres.

\paragraph{Duplicate detection.}
To avoid duplicate processing of the same collision pair, Mochi applies two checks in Line~\ref{line:10}. 
First, the ray origin must lie within the proxy sphere of the intersected object, not just the AABB. 
This check is essential because a ray-AABB intersection does not guarantee a ray-proxy-sphere intersection. 
Assume ray $r_i$ detects a collision by hitting AABB of proxy-sphere($S_j$) and ray $r_j$ also detects collision. 
If the first half of \emph{if} condition is absent, both the rays will process the collision because ray $r_j$ only checks if ray $r_i$ hit the proxy-sphere($S_j$) (last part of \emph{if} condition).
Second, when both rays associated with a colliding pair detect the collision, only the ray with smaller ID is allowed to resolve the collision.

\paragraph{Resolving collisions and update.}
When resolving a detected collision (Line~\ref{line:11}), Mochi computes the collision force using standard DEM model equations, also accounting for damping forces. 
After all rays complete detecting and resolving the collisions, Mochi launches a CUDA kernel to apply boundary conditions and update particle properties based on the accumulated forces (Line~\ref{line:21}).
Starting from the narrow phase, all steps are executed on shader cores.





\section{Evaluation}
\label{sec:evaluation}
In this section, we evaluate Mochi to assess when and why it improves DCD on GPU RT architecture. 
Our evaluation is designed to answer four key questions: (1) whether Mochi accelerates DCD kernel independent of simulation overhead when compared to state-of-the-art (SOTA) BVH- and RT-based DCD kernels; (2) how Mochi scales with increase in overall workload and irregularity; (3) where the observed speedups originate, including the relative contributions of BVH construction, traversal, and collision processing; and (4) how sensitive Mochi is to relative particle motion across timesteps and BVH update strategies such as refit versus rebuild. 
These experiments evaluate the impact of Mochi’s DCD reformulation, rather than the effects of tuning BVH maintenance parameters or other implementation specific optimizations.
To this end, we first benchmark Mochi’s DCD kernel in isolation, then evaluate it within full DEM simulations across a range of workloads, and finally analyze its performance characteristics and limitations.

\paragraph{Baselines}
We use three GPU-based implementations as baselines, two of which primarily use shader cores and one that leverages RT architecture.
Zhao et al.~\cite{zhao2023leveraging} is a SOTA RT-based DCD implementation that uses fixed-radius neighbor search with the search radius equal to four times the maximum particle radius.
Oibvh~\cite{oibvh} is a SOTA DCD implementation based on shader cores that uses a BVH as its indexing structure.
Taichi~\cite{taichi} is a domain-specific language for high-performance graphics applications such as DCD; although it exposes a high-level programming interface, its underlying CUDA kernels are heavily optimized.
We use uniform grid and hashmap-based particle simulation implementations provided on Taichi's blog as simulation baselines.
We compare Mochi’s DCD kernel against Zhao \etal~\cite{zhao2023leveraging} and Oibvh~\cite{oibvh}.
For end-to-end simulations, we use Taichi as the baseline.

\paragraph{Experimental setup}
We implemented Mochi using OptiX $8$ and CUDA $12.2$ on NVIDIA GeForce RTX 4070Ti GPU. 
We evaluate Mochi and compare it with a range of baselines using four benchmarks to explore its performance in detail.
We obtained Oibvh source code from authors, varying only the maximum number of the threads the code can issue. 
We could not obtain code from Zhao \etal, so we implemented their DCD approach in our simulation pipeline.

Taichi uses a Python interface to call its optimized CUDA kernels underneath. 
To allow fair comparison, only the time taken by CUDA kernels is counted towards runtime so that Taichi is not penalized due to Python overheads.
Both Taichi and Mochi were warmed up for 200 iterations before timing the simulation, since both use JIT compiler. 
Additionally, Taichi's offline caching was enabled during the experiments, representing best-practice usage.

\paragraph{DEM simulations}
In all simulations, the run time includes build, DCD, and, update times.
The build time refers to the initial indexing of particles and refitting or rebuilding of the index structures after every iteration.
The DCD time accounts for the collision detection time, including the computation time for normal forces once a collision is detected.
The update time includes testing boundary conditions and updating the particle's center and velocity due to collisions.
In simulations, one iteration signifies a timestep, and it's value is described in the benchmarks sections.
Only gravity and normal collision forces are considered, while friction and tangential forces are not computed.

\paragraph{Benchmarks} We considered three datasets: 1) $5$ million (M) particles in a unit cube with their radii uniformly distributed in the range $0.0005$-$0.0006$ and a density of $500$, 2) $1$M particles in a 16-unit cube with three different radii ranges $0.0005$-$0.0006$, $0.0005$-$0.006$, and, $0.0005$-$0.06$, and 3) $14$ bunnies totaling of $1.1$M particles with radius $r(=0.005)$ or $r/2$.
The time steps are $1e$-$4$, $5e$-$5$, and $2.5e$-$5$ seconds, respectively. 
The datasets are used to simulate two motions: freefall under gravity and rotating gravity in $XZ-$plane at an angular velocity of $0.25\pi$ radians per second. 

\subsection{Benchmarking DCD kernel}
In this section, we benchmark Mochi's DCD kernel against the state-of-the-art BVH-based implementations using shader and RT cores, isolating collision detection performance from simulation overhead in order to directly compare different DCD formulations on the GPU.

\subsubsection{Comparing with prior work on RT cores}
In Figure~\ref{fig:prior_power}, we compare the total simulation time of Mochi with Zhao \etal~\cite{zhao2023leveraging} by varying the number of particles in the second dataset with radii in the range $0.0005$-$0.06$ under freefall.
Mochi achieves an overall speedup of $1.04$x-$1.56$x as the number of particles increases from $10$K to $1$M.
We observe that Mochi’s speedups increase with an increase in the number of particles, indicating better scalability of its DCD formulation.

\begin{figure}[h]
    \centering
    \includegraphics[width=0.5\linewidth]{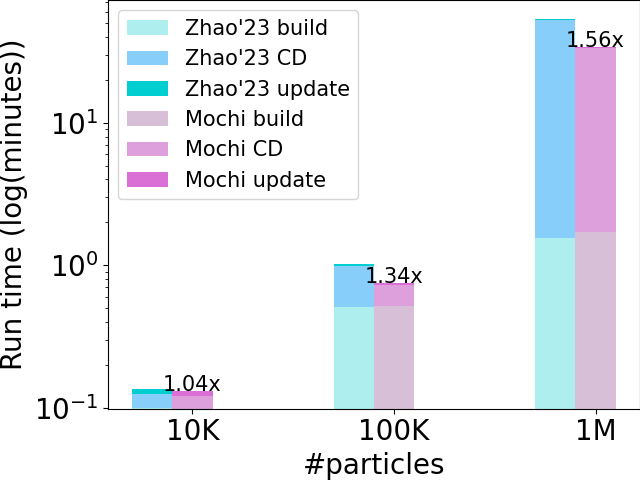}
    \caption{Performance comparison of Mochi DCD kernel against SOTA shader and RT core-based implementations. Freefall on particles in radii range $0.0005$-$0.06$. Then numbers on Mochi bars show Mochi speedups over baseline.}
    \Description{Performance comparison of Mochi DCD kernel against SOTA shader and RT core-based implementations. Freefall on particles in radii range $0.0005$-$0.06$. Then numbers on Mochi bars show Mochi speedups over baseline. The subsubsection referencing this image explains the trends.}
    \label{fig:prior_power}
\end{figure}

In Figure~\ref{fig:prior_radii}, we compare the DCD kernel times under rotating gravity for three different radii ranges in the second dataset.
As the ratio of maximum to minimum particle radius increases, Mochi’s speedup over Zhao \etal increases from $1.0$x to $4.33$x.
This trend is expected: as the fixed-radius RT baseline constructs AABBs of side length four times the maximum particle radius, whereas Mochi constructs per-particle AABBs proportional to each sphere’s radius.
As the ratio between the maximum and minimum radii increases, the inefficiency of using maximum radius as a parameter for AABB size becomes more pronounced.
By using tighter, per-particle AABBs, Mochi significantly reduces unnecessary intersection tests, resulting in increasing speedups with greater radius non-uniformity and higher number of particles.


\begin{figure}[h]
    \centering
    \includegraphics[width=0.5\linewidth]{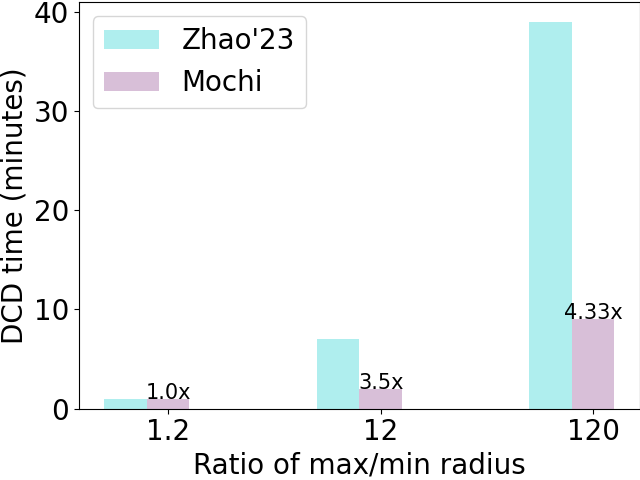}
    \caption{Performance comparison of Mochi DCD kernel against SOTA RT core-based implementation for varying particles radii range. Rotating gravity in $xz$-plane with constant number of $1$M particles. Then numbers on Mochi bars show Mochi speedups over baseline.}
    \Description{Performance comparison of Mochi DCD kernel against SOTA RT core-based implementation for varying particles radii range. Rotating gravity in $xz$-plane with constant number of $1$M particles. Then numbers on Mochi bars show Mochi speedups over baseline. The subsubsection referencing this image explains the trends.}
    \label{fig:prior_radii}
\end{figure}

\subsubsection{Comparing with SOTA BVH}
In Figure~\ref{fig:oibvh}, we compare Mochi with Oibvh~\cite{oibvh}, a state-of-the-art BVH implementation for DCD, using freefall and rotating gravity benchmarks on $1$M particles.
We modified Oibvh code so that collisions between spherical particles are detected instead of triangles, which it was originally built for.
Since Oibvh does not perform simulation, we export particle positions to text files per frame and input these files to Oibvh and Mochi to measure only the build and DCD times.

\begin{figure}[h]
    \centering
    \includegraphics[width=0.5\linewidth]{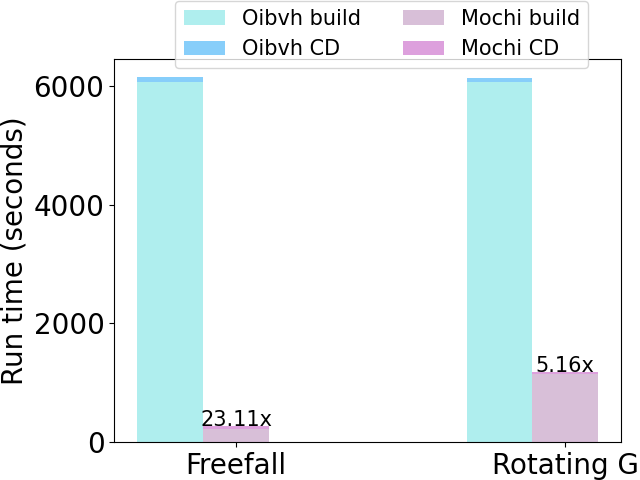}
    \caption{Performance comparison of Mochi DCD kernel against Oibvh across two benchmarks for $1$M particles. Rotating G stands for rotating gravity. Then numbers on Mochi bars show Mochi speedups over baseline.}
    \Description{Performance comparison of Mochi DCD kernel against Oibvh across two benchmarks for $1$M particles. Rotating G stands for rotating gravity. Then numbers on Mochi bars show Mochi speedups over baseline. The subsubsection referencing this image explains the trends.}
    \label{fig:oibvh}
\end{figure}

The overall speedups of Mochi over Oibvh are $23.11$x and $5.16$x, of which build time speedups are $26.3$x and $5.25$x for freefall and rotating gravity benchmarks, respectively.
The higher speedup for freefall benchmark compared to rotating gravity stems from the fact that Mochi refits the BVH, whereas Oibvh rebuilds it every frame.
In DCD times, Mochi outperforms Oibvh by a factor of $2$x for both benchmarks, indicating that the BVH constructed by RT architecture provides better traversal efficiency, despite Oibvh spending several times longer on BVH construction.

\subsection{Benchmarking DEM simulations}

In this section, we benchmark Mochi against Taichi in end-to-end DEM simulations to study the impact of collision detection, BVH maintenance, and particle motion on each other in realistic workloads.

\subsubsection{Freely falling particles}
\label{subsec:freefall}
Figure~\ref{fig:freefall_dataset} shows a frame from the freefall simulation of particles in the first dataset\footnote{Refer to the 8:16 mark of the video for a complete simulation.}.
Figure~\ref{fig:freefall_plot} shows the total runtime of simulations produced by Taichi using uniform grid and hashmap implementations, and by Mochi, as the number of particles varies from $1$M to $5$M.
Each simulation runs for $80{,}000$ iterations.
In every iteration, Mochi refits the BVH, while Taichi rebuilds its grid and hashmap.
For $1$M particles, the mean and standard deviation of runtime over $5$ runs are $31.44$ and $0.026$ for Taichi’s hashmap, and $6.73$ and $0.029$ for Mochi.
For $5$M particles, Taichi’s uniform grid implementation times out.
Across all configurations, Mochi consistently outperforms Taichi’s fastest variant.
We analyze the source of these speedups in Section~\ref{subsec:speedups}.

\begin{figure}[h]
    \centering
    \begin{minipage}{\linewidth}
        \centering
        \begin{subfigure}{0.49\linewidth}
            \includegraphics[width=\linewidth]{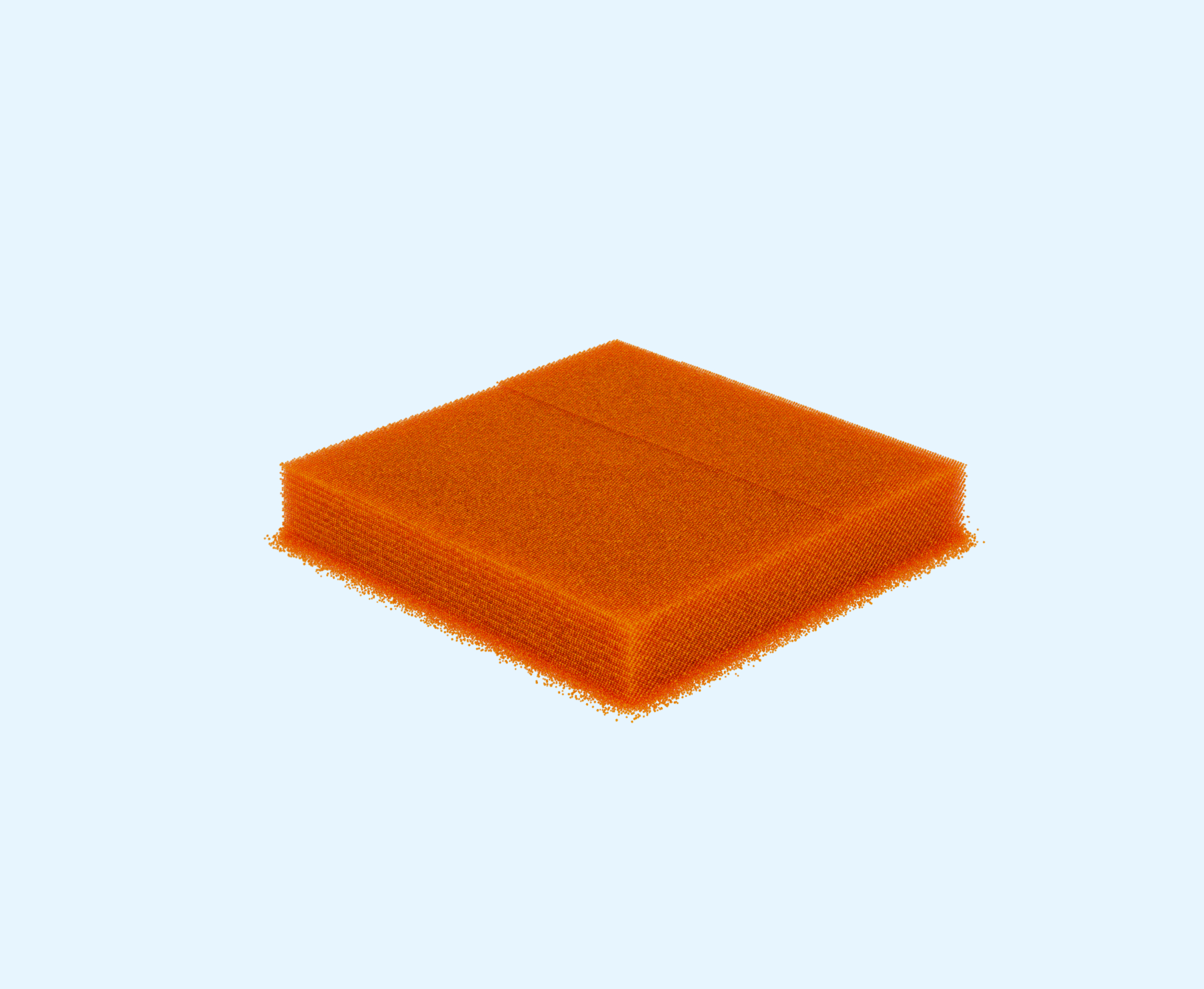}
            \caption{Frame 6 in simulation}
            \label{fig:freefall_dataset}
        \end{subfigure}
        \hfill
        \begin{subfigure}{0.49\linewidth}
            \includegraphics[width=\linewidth]{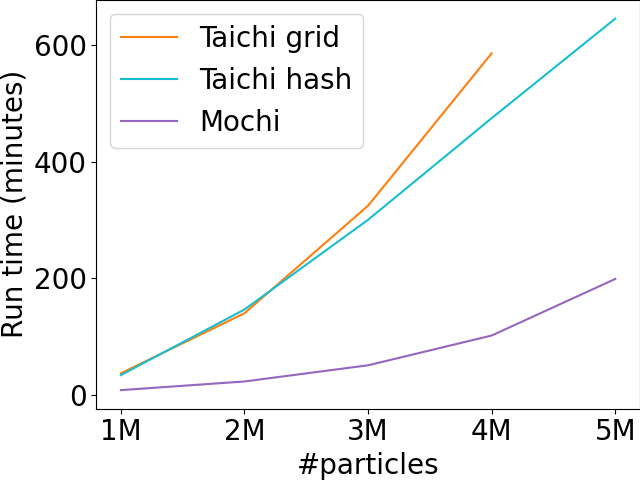}
            \caption{Runtime comparison}
            \label{fig:freefall_plot}
        \end{subfigure} 
    \end{minipage}
    \caption{Performance comparison of Mochi against Taichi for freefall DEM particle simulation of 8 seconds. }
    \Description{Performance comparison of Mochi against Taichi for freefall DEM particle simulation of 8 seconds. The subsubsection referencing this image explains the trends.}
    \label{fig:1afreefall}
\end{figure}

\subsubsection{Rotating gravity}
\label{subsec:rotate}
To induce increased particle motion and a sloshing effect, we rotate gravity (Figure~\ref{fig:rotate_dataset}) and test the indexing structure’s ability to adapt to more dynamic workloads\footnote{Refer to the 16:24 mark of the video for a complete simulation.}.
Unlike the freefall benchmark in Section~\ref{subsec:freefall}, we observe that rebuilding the BVH for Mochi instead of refitting it at every iteration would result in significantly faster overall simulation time.
This indicates that the BVH refit operation offered by RT architecture is best suited for scenarios with limited relative particle displacement, similar to conventional BVH refit strategies.
We could not report how fast the rebuild is, as the refit-based version times out.

\begin{figure}[h]
    \centering
    \begin{minipage}{\linewidth}
        \centering
        \begin{subfigure}{0.49\linewidth}
            \includegraphics[height=3.3cm, width=\linewidth]{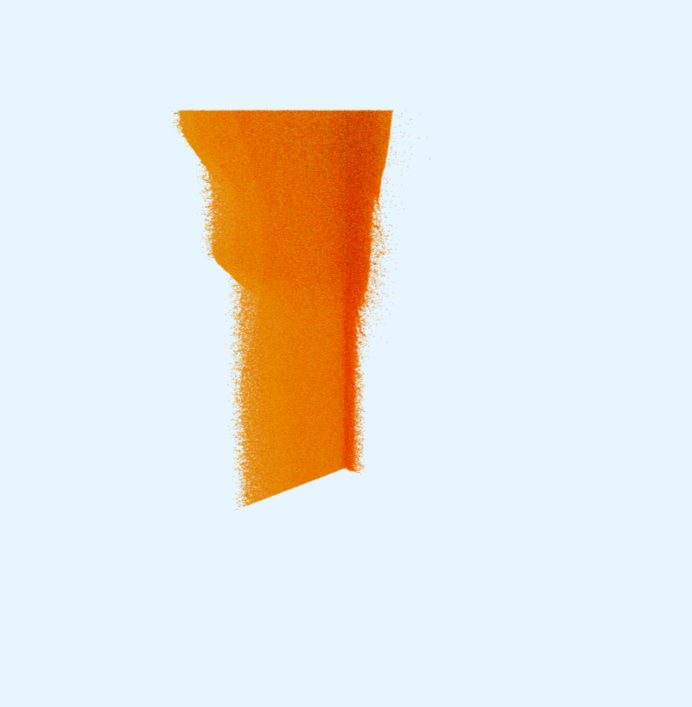}
            \caption{Frame 70 in simulation}
            \label{fig:rotate_dataset}
        \end{subfigure}
        \hfill
        \begin{subfigure}{0.49\linewidth}
            \includegraphics[width=\linewidth]{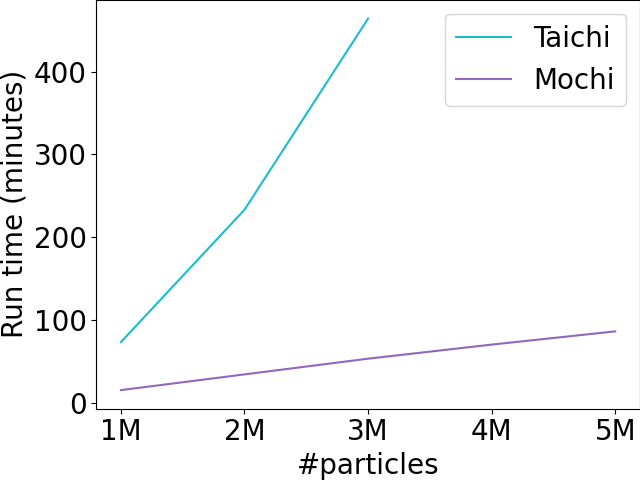}
            \caption{Runtime comparison}
            \label{fig:rotate_plot}
        \end{subfigure} 
    \end{minipage}
    \caption{Performance comparison of Mochi against Taichi for rotating gravity DEM particle simulation of 8 seconds.}
    \Description{Performance comparison of Mochi against Taichi for rotating gravity DEM particle simulation of 8 seconds. The subsubsection referencing this image explains the trends.}
    \label{fig:1b_rotate}
\end{figure}

Figure~\ref{fig:rotate_plot} compares the runtime of Taichi and Mochi when simulating particles from the first dataset under rotating gravity for $160{,}000$ iterations.
In this experiment, the number of collision pairs increases at a higher scale than in the freefall benchmark due to increased particle motion.
For Taichi, we report only for the hashmap implementation in the rest of the evaluation, as the uniform grid version is either slow or times out.
From $1$M to $3$M particles, Mochi achieves speedups of $4.87$x to $8.75$x over Taichi.
Beyond $3$M particles, Taichi’s hashmap runs out of memory, which is expected since it stores all potential collision pairs prior to resolving them.
In contrast, Mochi resolves collisions immediately without storing potential pairs.
Compared to the previous freefall experiment with less particle motion, Mochi achieves higher speedups, highlighting that RT architecture’s ability to efficiently refit BVHs allows Mochi to better handle highly dynamic particle motion.

\subsubsection{Irregular workloads - changing radii}
Figure~\ref{fig:radii_plot} compares the runtimes of Taichi and Mochi when simulating particles with different radii ranges ($0.0005$--$0.006$ and $0.0005$--$0.06$) from the second dataset under rotating gravity for $160{,}000$ iterations (Figure~\ref{fig:radii_dataset})\footnote{Refer to the 24:32 mark of the video for a complete simulation.}.
The time spent in individual phases is shown using a gradient of colors, and the labels on Mochi’s bars indicate speedups over Taichi.
For both Taichi and Mochi, the index times are almost the same across the two radii ranges, indicating that the indexing cost of Taichi’s hashmap and Mochi’s BVH depends primarily on the number of particles rather than their radii.

\begin{figure}[h]
    \centering
    \begin{minipage}{\linewidth}
        \centering
        \begin{subfigure}{0.49\linewidth}
            \includegraphics[height=3.3cm,width=\linewidth]{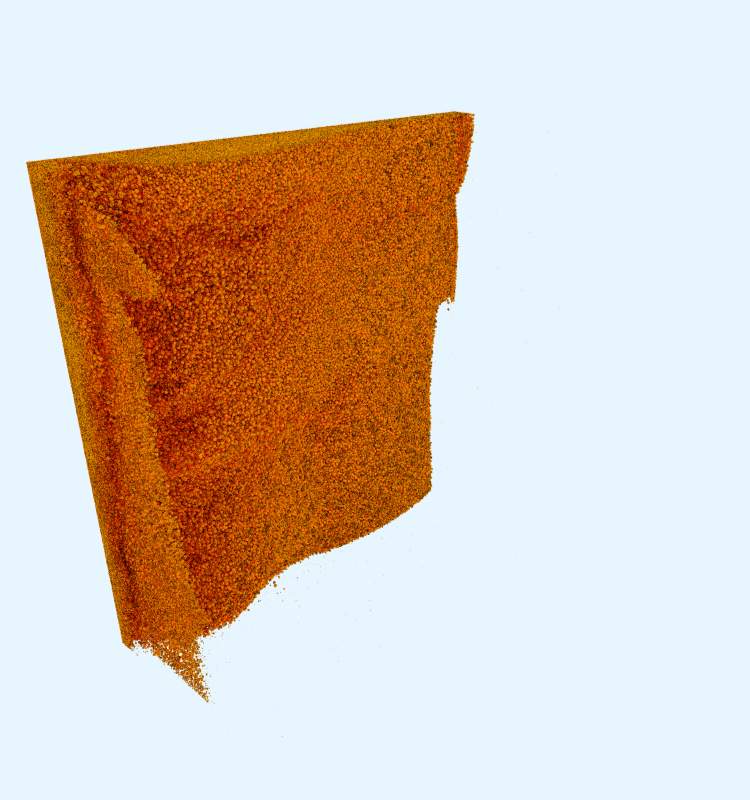}
            \caption{Frame 168 in $0.0005$-$0.06$ setting}
            \label{fig:radii_dataset}
        \end{subfigure}
        \hfill
        \begin{subfigure}{0.49\linewidth}
            \includegraphics[width=\linewidth]{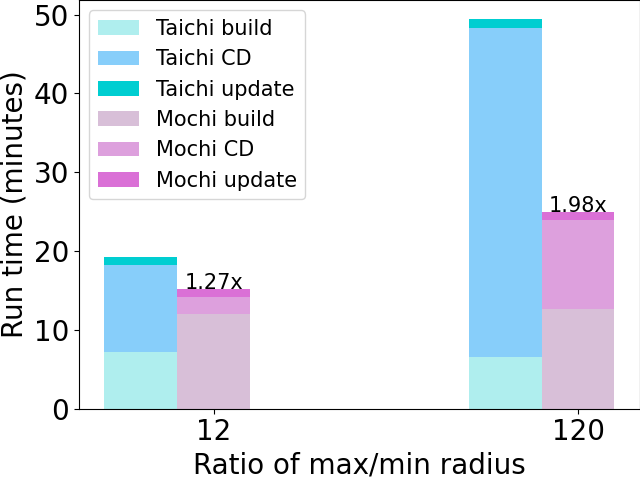}
            \caption{Runtime comparison}
            \label{fig:radii_plot}
        \end{subfigure} 
    \end{minipage}
    \caption{Performance comparison of Mochi against Taichi for rotating gravity DEM particle simulation of 8 seconds. $1$M particles with maximum to minimum radius ratio of $12$ and $120$ with radius in the range of $0.0005$-$0.006$ and $0.0005$-$0.06$, respectively. Then numbers on Mochi bars show Mochi speedups over baseline.}
    \Description{Performance comparison of Mochi against Taichi for rotating gravity DEM particle simulation of 8 seconds. $1$M particles with maximum to minimum radius ratio of $12$ and $120$ with radius in the range of $0.0005$-$0.006$ and $0.0005$-$0.06$, respectively. Then numbers on Mochi bars show Mochi speedups over baseline. The subsubsection referencing this image explains the trends.}
    \label{fig:1c_radii}
\end{figure}

A key observation is that Mochi’s DCD time increases at a slower rate than Taichi’s as the particle size increases.
In Taichi’s hashmap implementation, the size of each hash cell is set to four times the radius of the largest particle in the dataset --- just large enough to contain any collision pair.
As the radius of the largest particle increases, the number of hash cells decreases, which becomes problematic in scenarios with a wide range of particle sizes, such as the $0.0005$--$0.06$ setting.
In this case, several smaller particles are mapped to the same hash cells, increasing hashmap collisions and lookup cost.
In contrast, Mochi tailors the AABBs for individual particles, making its BVH more robust to radius non-uniformity.

Moreover, larger particles tend to collide with a greater number of neighbors.
In both Mochi and Taichi’s hashmap implementations, a single thread is assigned per particle; however, in the hashmap case, the thread corresponding to a large particle performs significantly more work.
This load imbalance does not arise in Mochi, since collision is detected by rays corresponding to the smaller particles, allowing work to be more evenly distributed.

\subsubsection{Bunny}
In this benchmark, we drop 14 bunny objects, where each bunny consists of $82{,}337$ spherical particles (Figure~\ref{fig:bunny_dataset}).
We simulate dropping $1$ and $14$ bunnies into unit and 2-unit cubes, respectively, for $20{,}000$ and $30{,}000$ iterations\footnote{Refer to the 32:56 mark of the video for a complete simulation.}.
Figure~\ref{fig:bunny_plot} shows the runtime comparison between Taichi and Mochi.
Mochi’s runtime is $11.06$x and $9.26$x lower than Taichi’s for the two bunny configurations.
Even in these highly clustered scenarios, Mochi’s RT-architecture-based BVH construction and collision detection outperform Taichi’s optimized shader kernel implementation.

\begin{figure}[h]
    \centering
    \begin{minipage}{\linewidth}
        \centering
        \begin{subfigure}{0.49\linewidth}
            \includegraphics[height=3.3cm,width=\linewidth]{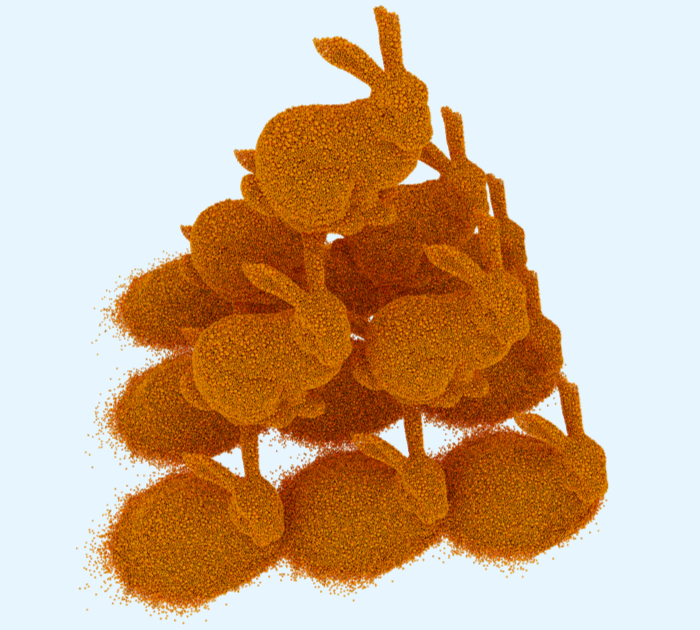}
            \caption{Frame 110 in simulation}
            \label{fig:bunny_dataset}
        \end{subfigure}
        \hfill
        \begin{subfigure}{0.49\linewidth}
            \includegraphics[width=\linewidth]{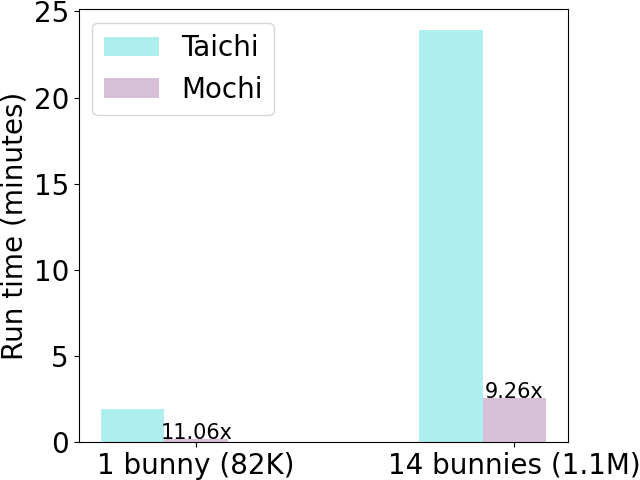}
            \caption{Runtime comparison}
            \label{fig:bunny_plot}
        \end{subfigure} 
    \end{minipage}
    \caption{Performance comparison of Mochi against Taichi for freefall DEM particle simulation of bunnies made of spherical particles. Then numbers on Mochi bars show Mochi speedups over baseline.}
    \Description{Performance comparison of Mochi against Taichi for freefall DEM particle simulation of bunnies made of spherical particles. Then numbers on Mochi bars show Mochi speedups over baseline. The subsubsection referencing this image explains the trends.}
    \label{fig:breakdown}
\end{figure}

\subsection{Performance Analysis}

\begin{table*} 
    \centering
    \caption%
      {Comparison of individual kernel and total times taken by Taichi and Mochi on millions of freely falling particles. All times in minutes.}
    \label{table:speedups}
    \begin{tabular}{l@{}rrrr rrrr rrrr} 
        \toprule

         \multirow{2}{*}{\raisebox{-\heavyrulewidth}{Number of }}
          & \multicolumn{8}{c}{Time in minutes} & \multicolumn{4}{c}{Speedups} \\
        \cmidrule(lr){2-9}

        particles  & \multicolumn{4}{c}{Taichi} & \multicolumn{4}{c}{Mochi}& \\
        \cmidrule(lr){2-5}\cmidrule(lr){6-9} \cmidrule(lr){10-13} 
        
        & \multicolumn{1}{c}{Build} 
        & \multicolumn{1}{c}{DCD}  
        & \multicolumn{1}{c}{Update} 
        & \multicolumn{1}{c}{Total} 
        & \multicolumn{1}{c}{Build} 
        & \multicolumn{1}{c}{DCD}  
        & \multicolumn{1}{c}{Update} 
        & \multicolumn{1}{c}{Total} 
        & \multicolumn{1}{c}{Build} 
        & \multicolumn{1}{c}{DCD}  
        & \multicolumn{1}{c}{Update}
        & \multicolumn{1}{c}{Total} 
        \\\midrule

1000000 & 3.3 & 29.49 & 0.58 & 33.37 & 1.63 & 5.44 & 0.49 & 7.56 & 2.02 & 5.42 & 1.19 & 4.41 \\
2000000 & 3.51 & 141.3 & 1.15 & 145.95 & 3.14 & 18.27 & 0.99 & 22.4 & 1.12 & 7.73 & 1.16 & 6.51 \\
3000000 & 3.7 & 294.57 & 1.7 & 299.97 & 4.19 & 44.5 & 1.5 & 50.18 & 0.88 & 6.62 & 1.13 & 5.98 \\
4000000 & 3.95 & 468.17 & 2.27 & 474.38 & 5.1 & 94.18 & 2.0 & 101.28 & 0.77 & 4.97 & 1.13 & 4.68 \\
5000000 & 4.2 & 638.28 & 2.83 & 645.31 & 5.78 & 190.08 & 2.5 & 198.36 & 0.73 & 3.36 & 1.13 & 3.25 \\
        \bottomrule
    \end{tabular}
    \label{table:results}
\end{table*}

In this subsection, we analyze Mochi’s performance gains and limitations by comparing it with SOTA DCD implementations across several parameters such as the number of particles, rebuild frequency, and GPU architecture.

\subsubsection{Where does the speedup come from?} 
\label{subsec:speedups}

Table~\ref{table:results} shows the breakdown of simulation times for Taichi and Mochi, along with the speedups achieved by Mochi over Taichi’s hashmap implementation for the freefall benchmark with $1$M particles described in Section~\ref{subsec:freefall}.
The time spent in all phases for both Taichi and Mochi (Columns $2$--$8$) increases as the number of particles increases.
Build time speedups (Column $9$) decrease, while DCD speedups (Column $10$) increase, and update-time speedups (Column $11$) remain nearly constant.
With the exception of the $1$M particles case, the total time speedups (Column $12$) decrease as the number of particles increases.
This is because Mochi’s build and DCD times are increasing at a higher rate than Taichi’s.
Specifically, Mochi’s total runtime increases by $2.96$x from $1$M to $2$M particles and by $1.96$x from $4$M to $5$M particles, whereas Taichi’s total runtime initially increases by $4.37$x and later by $1.36$x over the same ranges.
We attribute this decrease in the rate of Taichi's simulation time and the corresponding increase in Mochi’s speedup at $2$M particles to Taichi’s offline caching, which reuses previously compiled kernels to optimize subsequent executions.

Although the decreasing speedup trend raises the question of whether Mochi would continue to outperform Taichi at much larger scales, Taichi is more likely to face out-of-memory errors before outperforming Mochi, due to the rapid increase in the number of collision pairs. 
However, it is challenging to empirically quantify beyond $5$M particles in practice, as simulations at this scale run for several hours.

\paragraph{Build time speedups} 
Build times increase for both Mochi and Taichi as the number of particles increases. 
However, Mochi’s build-time speedups diminish and disappear beyond $3$M particles, owing to poorly refit BVH.
In this experiment, Mochi refits the BVH at every iteration.
During freefall, particles repeatedly collide with the ground and bounce upon touching the ground.
Higher the number of particles, more they bounce, larger the relative displacement and the number of collisions.
These effects degrade the quality of the refitted BVH, reducing the benefit of refitting at larger scales.
Refitting exposes the inherent trade-off between BVH maintenance cost and traversal quality, a limitation shared by all BVH-based prior work.


\subsubsection{Performance across scale}
Figure~\ref{fig:10power} plots the runtimes of Taichi and Mochi for the freefall simulation as the number of particles increases from $1$K to $1$M in factors of $10$.
Taichi’s runtime barely changes until it reaches $1$M particles, indicating that its execution time is dominated by overhead costs for particle counts below $100$K.
In contrast, Mochi’s runtime increases steadily with an increase in the number of particles.
These differing trends result in a dip in speedup around $100$K particles.


\begin{figure}[h]
    \centering
    \includegraphics[width=0.5\linewidth]{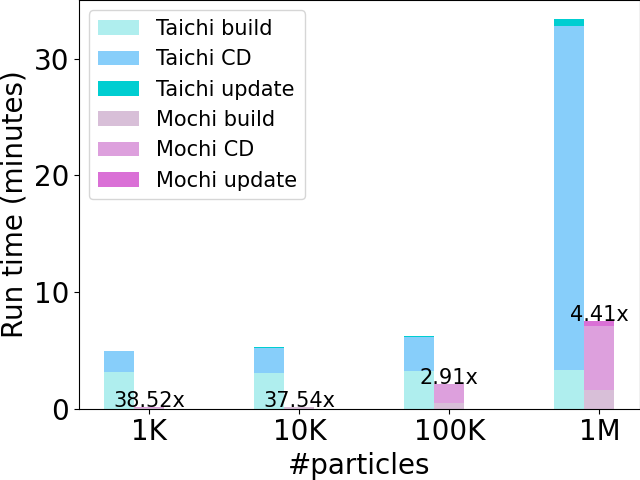}
    \caption{Performance comparison of Mochi against Taichi for freefall DEM particle simulation as the number of particles increases at a factor of 10. Then numbers on Mochi bars show Mochi speedups over baseline.}
    \Description{Performance comparison of Mochi against Taichi for freefall DEM particle simulation as the number of particles increases at a factor of 10. Then numbers on Mochi bars show Mochi speedups over baseline. The subsubsection referencing this image explains the trends.}
    \label{fig:10power}
\end{figure}

\subsubsection{Exploring Mochi's index times}
\label{subsec:build_heur}
Figure~\ref{fig:build_heur} shows how Mochi’s build and DCD times change as the number of iterations between BVH rebuilds varies during the rotating gravity simulation with $1$M particles.
After the initial build of BVH, it is rebuilt every $n$-th iteration and refitted in the remaining iterations.
As the rebuild frequency decreases, both the total build time and the overall runtime decrease, since BVH refitting is significantly cheaper than rebuilding.
However, when rebuilds are too infrequent, repeated refitting degrades BVH quality, leading to increased DCD time and, consequently, higher total runtime.

\begin{figure}[h]
    \centering
    \includegraphics[width=0.5\linewidth]{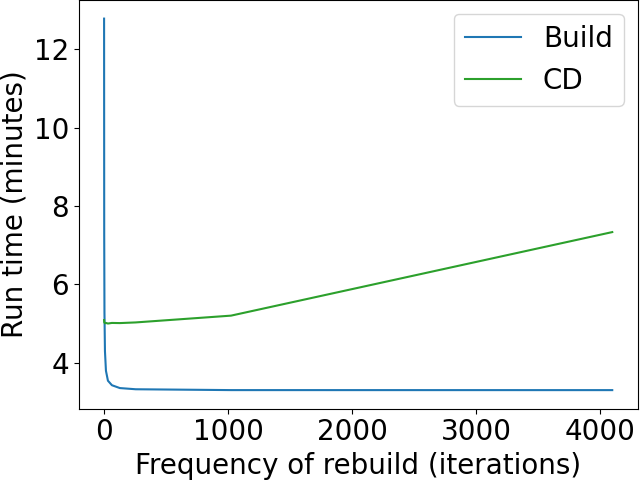}
    \caption{Build and DCD times of Mochi as the number of iterations between two rebuilds vary.}
    \Description{Build and DCD times of Mochi as the number of iterations between two rebuilds vary. The subsubsection referencing this image explains the trends.}
    \label{fig:build_heur}
\end{figure}

\subsubsection{Performance on a different RT hardware}

\begin{table}[h!]
    \centering
    \caption%
    {Hardware specifications of RTX 4070Ti and RTX 6000 GPUs}%
    \label{table:io_triton}
    \begin{tabular}{l@{}rrr} 
        \toprule
        & RTX 4070ti & RTX 6000 & Increase factor\\\midrule
        RT cores & 80 & 142 & 1.78x\\
        Shader cores & 7,680 & 18,176 & 2.37x\\
        Memory & 12GB & 48GB & 4x\\
        \bottomrule
    \end{tabular}
\end{table}

To demonstrate that Mochi’s implementation generalizes across different RTX GPUs, we conducted a freefall simulation with $1$M particles on an RTX 6000.
Table~\ref{table:io_triton} summarizes the hardware differences between the GPUs.
Mochi achieves speedups over Taichi of $1.51$x, $5.55$x, $1.44$x, and $3.96$x for build, DCD, update, and total runtimes, respectively.
Compared to the RTX 4070Ti, the RTX 6000 provides a larger increase in shader cores ($2.37$x) than in RT cores ($1.78$x), yet Mochi achieves comparable DCD speedups.
In contrast, build-time and overall speedups are slightly lower than those observed on the RTX 4070Ti (Table~\ref{table:speedups}), indicating that BVH construction benefits less from the increased number of shader cores.

\section{Discussion}
\label{sec:discussion}

\paragraph{Numerical robustness.}
To assess the robustness of Mochi under floating point inconsistencies, we performed a sensitivity analysis by perturbing particle positions with small random noise. 
Specifically, we generated 30 perturbed variants of the $1$M particles dataset (described in Section~\ref{sec:evaluation} by introducing noise sampled uniformly in the range $[1e{-6}, 1e{-5}]$ to particle positions.
After performing freefall simulation on the perturbed datasets, we compared the per-timestep distribution of detected collisions produced by Mochi and Taichi. 
While the exact number of collision pairs may differ due to floating-point operations, the collision count distributions were indistinguishable between the two implementations.
This indicates that Mochi exhibits numerically stable behavior and performs the same logical collision detection task as existing GPU-based approaches, despite differences in execution model.

\paragraph{Geometries beyond spheres.}
In principle, the narrow phase of Mochi (implemented in the \texttt{OPTIX\_INTERSECT} kernel) could be programmed to handle other geometric primitives, provided that robust object--object intersection tests are available.
However, extending the collision detection beyond spheres is primarily a mathematical challenge rather than a computational one.
Even for simple generalizations such as ellipsoids, detecting collisions when  ellipsoids are oriented along arbitrary axes requires solving complicated quartic equations, which is computationally expensive and numerically unstable~\cite{ellipsoids}. 
As a result, approximating complex geometries using spheres remains a widely used optimization. 
By leveraging such approximations, Mochi can be utilized to support a broad range of non-spherical objects without modifying its core execution model.

\paragraph{Limitations.}
While Mochi’s design can be extended beyond spheres and to other hardware platforms, the strongest claim supported by our implementation and evaluation is that Mochi correctly performs DCD for rigid spheres on NVIDIA RT hardware.
Mochi detects collisions at discrete time steps and may therefore miss collisions that would be detected by continuous collision detection algorithms.
Mochi treats each particle as a rigid spherical object and does not account for deformations. 
Mochi operates using single-precision floating point arithmetic on GPU RT architecture; as with other large-scale GPU simulations, numerical discrepancies may arise, although the algorithm itself is deterministic and theoretically sound.

\paragraph*{Limitations imposed by underlying RT architecture.}
The BVH constructed by RT architecture is opaque to the programmer, making it difficult to directly assess BVH quality or determine when refitting is preferable to rebuilding.
In Section~\ref{subsec:build_heur}, we analyzed the  BVH maintenance cost by varying rebuild frequency; however, standard quality metrics such as surface area heuristic (SAH) are not accessible.
Additionally, current RT hardware does not support double-precision (FP64) operations, so Mochi operates in single precision (FP32) and considers only FP32 input.
If FP64 precision is required, a possible approach is to slightly expand the AABBs of proxy spheres so that no valid collisions are missed. 
This may introduce a few additional false positives in the broad phase, but these will be filtered out in the narrow phase, which runs on shader cores that use FP64 arithmetic for collision checks. 
Although our implementation uses OptiX, Mochi’s reduction relies only on efficiently reusing BVH traversal and ray–AABB intersection test functionalities provided by the underlying RT architecture, and is not tied to OptiX interface semantics.



\section{Related Work}
\label{sec:relatedwork}
In this section, we give an overview of Collision Detection (CD) works using GPUs, BVH, and ray tracing. We refer to the surveys~\cite{cdsurvey,bvhsurvey} for an in-depth review.

\paragraph*{Collision detection on GPUs}
Due to the computational intensity of CD, numerous algorithms and optimization methods have been developed to harness the parallel computing capabilities of GPUs. 
Spatial subdivision data structures such as uniform and hierarchical grids are faster to build since they are regular, compared to BVH-based approaches. 
Wong \etal~\cite{octree_2014} used an octree represented as a grid on GPU, while Weller \etal~\cite{kdet_2017} used hierarchical grids and hybrid spatial hashing. 
BVH-based approaches on GPU employ several optimizations to mitigate the performance loss due to the inherent irregular nature of BVH traversal.
A common approach is to use a fast BVH construction algorithm and then use auxiliary data structures during the traversal to make up for the poor BVH quality.
For example, Lauterbach \etal~\cite{lauterbach2010gproximity} used  tight-fitting BVHs and parallel tree-front-based BVH traversal, whereas Wang \etal~\cite{wang2018} used stackless depth-first search-based traversal, and tree front logs.
Chitalu \etal~\cite{oibvh} optimized the accelerating structures by building a specialized BVH to speed up BVH node access during traversal.

\paragraph{Neighbor Search using RT architecture}
Multiple researchers have attempted to accelerate proximity queries like k-nearest neighbor search and radius neighbor search using RT architectures. 
Zellman \etal~\cite{zellman} proposed a method where radius neighbor searches compute forces between point masses in graph drawing. 
Evangelou \etal~\cite{evangelou} applied a similar method to perform photon mapping. 
Zhu \etal~\cite{rtnn} and Nagarajan \etal~\cite{trueknn} further optimized this method for nearest neighbor applications using query re-ordering and bundling, and iteratively increasing the search radius, respectively.
However, neither of their optimizations are transferable to DCD as their optimizations focus on finding $k$ nearest neighbors rather than all radius neighbors needed in DCD.

\paragraph{Collision detection using RT hardware}
With the recent introduction of RT cores in GPUs, Vassilev \etal~\cite{vassilev_2021} and Zhao \etal~\cite{Zhao_arbitrary} proposed using RT architecture to perform velocity-based cloth simulations and large-scale DEM simulations of arbitrarily shaped particles, respectively.
Zhao \etal~\cite{zhao2023leveraging}, which is closest to our work, uses NNS reduction~\cite{evangelou} to perform DCD during DEM particle simulation.
Sui \etal~\cite{roboticspaper} use sphere-based robot approximations and swept trajectories to perform discrete-pose and continuous collision detection.

\paragraph{Collision detection using RT algorithm}
Most of the RT-based CD approaches that did not use RT hardware focus on accelerating the narrow phase.
For a given pair of potentially colliding volumetric deformable bodies, Hermann \etal~\cite{hermann} detect the collisions by shooting rays from surface vertices in the direction of their inward normals.
Kim \etal~\cite{kim_mesh} support deep penetrations and non-convex objects.
To reduce the number of rays cast for rigid convex and concave objects, Lehericey \etal introduce iterations over Hermann \etal algorithm, launch predictive rays and introduce a pipeline that selects an algorithm based on the object characteristics~\cite{lehericey_vrst_2013,lehericey_evge_2013, lehericey_grapp_2015}.
Even though their algorithm improves CD speed, it is dependent on the collisions detected initially and the approximations it employs.

\paragraph{BVH construction}
Our work relies on RT cores for both BVH construction and traversal, whereas much prior work focuses on optimizing these algorithms. 
Optimizing BVH construction is particularly important in highly dynamic, real-time scenes, where simple refitting can degrade performance due to reduced BVH quality. 
For instance, Linear BVH (LBVH) aims to minimize the construction cost while maintaining good BVH quality~\cite{Apetrei14}, and its combination with SAH balances construction and traversal efficiency~\cite{Lauterbach09}. 
Implicit approaches further reduce memory usage (and, in turn, build time) at the expense of more complicated tree access~\cite{oibvh}.

\paragraph{Continuous collision detection and others}
Continuous Collision Detection (CCD) algorithms find collisions between discrete time steps that DCD algorithms may not be able to find.
Brochu \etal~\cite{brochu2012efficient} proposed a geometrically exact detection for adaptive cloth simulation and used conservative culling and interval filters to improve performance.
Wang \etal~\cite{selfcollision} presented a CCD as well as a DCD algorithm for detecting self-collisions in deformable models using a normal cone test.
We do not currently explore how Mochi can be paired with CCD or self-collision algorithms.

\section{Conclusion}
\label{sec:conclusion}
Mochi shows that rethinking how collision detection is reduced to RT primitives, rather than optimizing existing pipelines, is essential for effectively exploiting RT architecture for non-rendering workloads.
We introduce a new RT-based formulation for accelerating discrete collision detection, despite the lack of access to the data structures built by the architecture.
Across all of our benchmarks, Mochi achieves consistent speedups, even for simulations involving up to $5$ million particles. 
At this scale, simulations often run for hours to produce only a few seconds of output, making even modest performance improvements practically significant.
Graphics libraries such as Taichi, which target high-performance simulation workloads, can benefit from incorporating Mochi’s RT-based DCD formulation into their GPU kernel implementations.
As RT hardware continues to be adopted across GPU vendors, and our approach relies only on core functionalities such as BVH construction and ray traversal, Mochi is portable across RT APIs and is likely to benefit from continued improvements in each new RT generation.


\begin{acks}
We thank the anonymous reviewers for their valuable feedback. We are also grateful to Sunidhi Bachawala for valuable discussions on the dynamics of particle systems and motivating use cases, and to Raghav Malik for verifying the proofs. This work was funded by NSF grants CCF-1908504, CCF-1919197 and CCF-2216978.
\end{acks}

\bibliography{references}

\end{document}